\documentclass[11pt,a4paper]{article}
\usepackage[margin=1in]{geometry}
\usepackage{amssymb,amsthm,mathtools}
\usepackage{enumitem}
\usepackage{booktabs}
\usepackage{float}
\usepackage{microtype}
\usepackage{hyperref}
\usepackage{amssymb,amsmath,mathtools,amsthm}
\usepackage{enumitem,booktabs,float,multirow,array}
\theoremstyle{plain}
\newtheorem{theorem}{Theorem}[section]
\newtheorem{lemma}{Lemma}[section]
\newtheorem{proposition}{Proposition}[section]

\theoremstyle{definition}
\newtheorem{definition}{Definition}[section]
\theoremstyle{remark}

\newtheorem{notation}[theorem]{Notation}

\DeclareMathOperator{\Nm}{N}
\DeclareMathOperator{\Gal}{Gal}
\DeclareMathOperator{\Var}{Var}
\DeclareMathOperator{\Cov}{Cov}

\newcommand{\Z}{\mathbb{Z}}
\newcommand{\Q}{\mathbb{Q}}
\newcommand{\R}{\mathbb{R}}
\newcommand{\C}{\mathbb{C}}

\newcommand{\ri}{\mathrm{i}}

\newcommand{\nm}[1]{\lVert#1\rVert}
\newcommand{\abs}[1]{\lvert#1\rvert}
\newcommand{\ip}[2]{\langle #1,#2\rangle}
\newcommand{\tO}{\tilde{O}}
\newcommand{\CN}{\mathcal{CN}}
\newcommand{\Exp}{\mathrm{Exp}}

\newcommand{\E}{\mathbb{E}}

\title{Module Lattice Security (Part III):\\
{\large Structured CVP Distance on the Log-Unit Lattice}}
\author{Ming-Xing Luo
\\
\footnotesize
School of Information Science and Technology, Southwest Jiaotong University, Chengdu 610031, China}

\begin{document}
\maketitle

\begin{abstract}
We prove that the $L^2$ CVP distance from a random short ring element to the log-unit lattice of $\Q(\zeta_{2^k})$ converges to $\frac{\pi}{2\sqrt{6}}\sqrt{n}$ as $n=2^{k-1}\to\infty$. We then show that this target lies inside the Voronoi cell of the origin for $k\ge 4$. For the $L^\infty$ norm, the maximum over $n$ sub-Gaussian coordinates yields $O(\sqrt{\log n})$ which translates into a sub-polynomial approximation factor for the Short Generator Problem. We show a Coarse Lattice Theorem that Babai's algorithm returns zero for all structured targets, yet exactly recovers unit perturbations of arbitrary size. For module determinant ideals, we further prove the Trigamma Theorem that proves an intrinsic imbalance $\sigma_{g_0}=O(1)$ independent of the modulus $q$. Finally, combined with Parts I and II, we reduce the CDPR factor for ML-KEM from $\exp(\tO(\sqrt{n}))$ to a sub-polynomial value.
\end{abstract}

\medskip\noindent\textbf{Keywords:} Log-unit lattice, Short Generator Problem, CVP, CDPR attack, sub-Gaussian, ML-KEM, post-quantum cryptography.

\section{Introduction}\label{seintro}

The CDPR quantum attack \cite{CDPR16} on ideal lattices over $R=\Z[\zeta_{2^k}]$ achieves an approximation factor $\gamma=\exp(C_0\sqrt{n}\log n)$, where $n=2^{k-1}$, and $C_0>0$ is an absolute constant. Refined by Cramer, Ducas, and Wesolowski \cite{CDW17,CDW21} using Stickelberger relations, this attack implements with three phases. Phase 1 is named as quantum PIP which uses the Biasse-Song algorithm \cite{BiasseSong16} to find some generator $g$ of a given principal ideal $I=(g_0)\subset R$ in quantum polynomial time. Unfortunately, $g$ may be exponentially larger than the shortest generator $g_0$ as $g=g_0\epsilon$ for some unit $\epsilon\in R^\times$ that can have enormous norm. Phase 2 is log-unit CVP, where Closest Vector Problem (CVP) is resolved on the log-unit lattice $\Lambda$ to recover $\epsilon$. Note in the log-Minkowski embedding, the generators of ideal $I$ form the coset $\log\abs{\sigma(g_0)}+\Lambda$. So, the problem of finding a short generator is then reduced to locate the lattice point closest to $\log\abs{\sigma(g)}$. In Phase 3, the CVP solution can be used to identify a unit $\epsilon'$ that closes to $\epsilon$. Meanwhile, the short generator is recovered as $g'=g(\epsilon')^{-1}$. In this case, the approximation factor is given by $\gamma=\exp(\rho_\infty)$, where $\rho_\infty$ denotes the $L^\infty$-norm of the CVP residual.

Pellet-Mary, Hanrot, and Stehl\'{e} \cite{PMHS19} gave a time-approximation tradeoff with preprocessing. Felderhoff \textit{et al.} \cite{FPSW23} proved Ideal-SVP remains hard for small-norm prime ideals. For module lattices, Langlois and Stehl\'{e} \cite{LS15} proved the worst-case MLWE hardness. Mureau \textit{et al.} \cite{MPPW24} and Allombert \textit{et al.} \cite{APvW25} attacked rank-2 module-LIP. Chevignard \textit{et al.} \cite{CMEPW25} reduced Hawk scheme to PIP. Ducas \textit{et al.} \cite{DEP25} predicted module-BKZ quality. Babai \cite{Babai86} introduced the nearest-plane CVP algorithm, while refs.\cite{LLL82,Schnorr87,GN08,ADRS15} provide generic lattice reduction with progressively better approximation factors.

Note the worst-case bound treats the CVP target as an arbitrary point in the ambient space. However, the targets arising from short generators are highly structured: they are log-embeddings of ring elements with small coefficients. A natural question is whether this structure can be exploited to obtain tighter bounds. Our goal in what follow is to answer this problem.

Our probabilistic methods use classical central limit theorem (CLT) (Lindeberg \cite{Billingsley95,Feller71}) and sub-Gaussian concentration inequalities from Vershynin \cite{Vershynin18} and Boucheron-Lugosi-Massart \cite{BLM13}, the QR decomposition of random matrices and the associated $\chi$-squared distributions from Tao \cite{Tao12}, and the trigamma functions from Abramowitz-Stegun \cite{AS64}. Specifically, we prove the following results.
\begin{itemize}
  \item \textbf{Theorem \ref{thm:A}}
    For a short element $g\in R$ with i.i.d. bounded coefficients, its $L^2$ distance from $\Pi_{H_0}(\log\abs{\sigma(g)})$ to $\Lambda$ converges to $\frac{\pi}{2\sqrt{6}}\sqrt{n}\approx 0.6413\sqrt{n}$. 
  \item \textbf{Theorem \ref{thm:C}}
    The structured target vector lies inside the Voronoi cell of the origin in $\Lambda$ with probability $1-o(1)$ for $k\ge 4$. 
  \item \textbf{Theorem \ref{thm:B}}
    The $L^\infty$ distance is $O(\sqrt{\log n})$ for short generators, which implies the approximation factor $\gamma=\exp(O(\sqrt{\log n}))=o(n^\epsilon)$ for every $\epsilon>0$. For ML-KEM scheme ($n=256$), $\gamma\approx 2^{4.1}$ (Gaussian) or $2^{4.9}$ (ring), far below the threshold.
\end{itemize}

The rest of this paper is organized as follows. Section \ref{seprelim} introduces some preliminaries of the log-unit lattice, Babai's nearest-plane algorithm, and complex Gaussians. Section \ref{seclt} contributes the embedding uncorrelatedness, central limit theorem, and variance identity. Section \ref{secvp} presents three CVP distance Theorems. Section \ref{secoarse} shows the Coarse Lattice Theorem. Section \ref{setrigamma} provides the Trigamma Theorem and its implications for ML-KEM security while Section \ref{seconclusion} concludes the paper.

\section{Preliminaries}
\label{seprelim}

This section introduces algebraic, geometric, and probabilistic foundations used in the following sections. 

\subsection{Setup and notation}

The cyclotomic field $K=\Q(\zeta)$ is obtained by adding a primitive $m$-th root of unity $\zeta=e^{2\pi\ri/m}$ to the rational number field. Elements of $K$ are represented by $\sum_{i=0}^{n-1}a_i\zeta^i$ with $a_i\in\Q$. The degree of $K$ is $[K:\Q]=n=\varphi(m)=2^{k-1}$, because $\zeta$ satisfies the minimal polynomial $x^n+1$ over $\Q$ which is valid for $m=2^k$ with $k\ge 3$. The ring of integers $R=\Z[\zeta]$ consists of elements with integer coefficients $a_i\in\Z$.

The Galois group $\Gal(K/\Q)$ consists of all field automorphisms of $K$ that fix the field $\Q$. Since every such automorphism will map $\zeta$ to another primitive $m$-th root of unity, $\Gal(K/\Q)$ can be identified with $(\Z/m\Z)^\times$, i.e., the group of integers modulo $m$ coprime to $m$. For $m=2^k$, these are the odd integers $a\in\{1,3,5, \dots, m-1\}$, and the corresponding automorphism $\sigma_a$ is defined by $\sigma_a(\zeta)=\zeta^a$. This action is well-defined and  can be extended to all of $K$ by linearity.

An embedding $\sigma_j\colon K\hookrightarrow\C$ is a ring homomorphism from $K$ into the complex numbers. Since $K$ is complex for $k\ge 3$, all $n$ embeddings are mapped into $\C\setminus\R$. The conjugate pair of embeddings satisfies $\sigma_{m-j}(g)=\overline{\sigma_j(g)}$ for all $g\in K$, because $\zeta^{m-j}=\overline{\zeta^j}$.

For an element $g=\sum_{i=0}^{n-1}c_i\zeta^i\in R$, the $j$-th embedding can be defined by evaluating the polynomial at $\zeta^j$, i.e., 
\begin{align}\label{alemb}
  \sigma_j(g)=\sum_{i=0}^{n-1}c_i\zeta^{ij}=\sum_{i=0}^{n-1}c_i e^{2\pi\ri\cdot ij/m}.
\end{align}
This is a discrete Fourier transform (DFT) of the vector $(c_0, \dots, c_{n-1})$ evaluated at the $n$-th roots of $-1$ from $\zeta^n=e^{\pi\ri}=-1$.  

\begin{definition}\label{delogemb}
The log-embedding of $g\in R\setminus\{0\}$ is the vector of coordinate-wise log-moduli defined by 
\begin{align}\label{allogemb}
  L(g)=\bigl(\log\abs{\sigma_j(g)}\bigr)_{\text{odd }j}\in\R^n.
\end{align}
\end{definition}

This mapping transforms the multiplicative of $R$ into additive in $\R^n$ as 
\begin{align}\label{allogmult}
  L(g_1 g_2)=L(g_1)+L(g_2),
\end{align}
because $|\sigma_j(g_1 g_2)|=|\sigma_j(g_1)|\cdot |\sigma_j(g_2)|$. Especially, $g=g_0\epsilon$ becomes $L(g)=L(g_0)+L(\epsilon)$ for $g_0\in R\setminus\{0\}$ and $\epsilon\in R^\times$.

The trace-zero hyperplane is defined by $H_0=\{\mathbf{x}\in\R^n: \sum_j x_j=0\}$. The orthogonal projection onto $H_0$ is then defined by 
\begin{align}\label{alproj}
  \Pi_{H_0}(L(g))=L(g)-\frac{1}{n}\log\abs{\Nm_{K/\Q}(g)}\cdot\mathbf{1},
\end{align}
where the field norm $\Nm_{K/\Q}(g)=\prod_{\text{odd }j}\sigma_j(g)$, i,e., the product over all $n$ embeddings indexed by $j\in\{1,3, \dots, m-1\}$. Here, conjugate pairs $\sigma_j$ and $\sigma_{m-j}$ satisfy $\sigma_{m-j}(g)=\overline{\sigma_j(g)}$. 

Denote $J=\{1,3, \dots, n-1\}$ as a set representative for all $\tfrac{n}{2}$ conjugate pairs. Then $\sigma_j(g)\cdot\sigma_{m-j}(g)=|\sigma_j(g)|^2$ for each $j\in J$. So, $N_{K/\Q}(g)$ is real and positive as
\begin{align}\label{alnorm-conj}
  \Nm_{K/\Q}(g)=\prod_{j\in J}|\sigma_j(g)|^2
  =\prod_{\text{odd }j}|\sigma_j(g)|.
\end{align}
The field norm is a rational integer for $g\in R$.

The log-unit lattice $\Lambda$ lies in $H_0$ since units $\epsilon$ satisfy $\abs{\Nm(\epsilon)}=1$ and $\sum_j\log|\sigma_j(\epsilon)|=\log\abs{\Nm(\epsilon)}=0$. So, CVP will be defined in $H_0$. Moreover, using the projection \eqref{alproj}, the CVP distance is independent of the overall scale of $g$.

\subsection{The log-unit lattice}

By Dirichlet's Unit Theorem \cite{Neukirch99}, the unit group of a number field $K$ with $r_1$ real and $r_2$ pairs of complex embeddings has unit rank $r_1+r_2-1$. For $K=\Q(\zeta_{2^k})$ with $k\ge 3$, we have $r_1=0$ and $r_2=\tfrac{n}{2}$. This gives unit rank $r=\tfrac{n}{2}-1$. The log-embedding $L(\epsilon)$ for a unit $\epsilon$ satisfies $\log|\sigma_j(\epsilon)|=\log|\sigma_{m-j}(\epsilon)|$ by conjugation, so we have effective dimension $\tfrac{n}{2}$ and the lattice rank in $H_0$ $r=\tfrac{n}{2}-1$.

Since $h_k^+=1$ for $k\le 12$ (Part I), the cyclotomic units $\xi_a=\frac{\sin(a\pi/m)}{\sin(\pi/m)}$ for odd $a\in\{3, 5, \dots, n-1\}$ generate $R^\times$ modulo torsion \cite{Sinnott78,Wash97}. 

\begin{definition}\label{delogunit}
The log-unit lattice is defined by 
\begin{align}\label{alLambda-def}
  \Lambda=\{(\log|\sigma_j(\epsilon)|)_{\text{odd } j}: \epsilon\in R^\times\} \subset H_0.
\end{align}
Its rank $r=\tfrac{n}{2}-1$ and a basis $\{\mathbf{b}_a\}$ is indexed by $a\in\{3, \dots, n-1\}$, where $\mathbf{b}_a=\Pi_{H_0}((\log|\sigma_j(\xi_a)|)_j)$.
\end{definition}

$\Lambda$ is an $(\tfrac{n}{2}-1)$-dimensional lattice in the $(n-1)$-dimensional hyperplane $H_0\subset\R^n$. Its points correspond to the logarithmic signatures of units. The CVP on $\Lambda$ asks: given the log-signature of an arbitrary generator, find the unit whose log-signature is closest.

\begin{definition}\label{decoverradius}
The $L^p$-cover radius of a lattice $\Lambda\subset\R^m$ is defined by 
\begin{align}\label{alcover-rad}  \mu_p(\Lambda)=\max_{\mathbf{t}\in\R^m/\Lambda} \min_{\mathbf{v}\in\Lambda}\nm{\mathbf{t-v}}_p.
\end{align}
\end{definition}

This radius is the maximum CVP distance over all possible target vectors. The CDPR method \cite{CDPR16} uses $\mu_\infty(\Lambda)=O(\sqrt{n\log n})$. Instead, we show structured targets are much closer than this worst case.

\begin{definition}\label{devoronoi}
The Voronoi cell of a lattice point $\mathbf{v}\in\Lambda$ is defined as 
\begin{align}\label{alvoronoi}
  \mathcal{V}(\mathbf{v})=\{\mathbf{t}\in\R^m:\nm{\mathbf{t-v}}_2\le\nm{\mathbf{t-w}}_2 \mbox{ for all }\mathbf{w}\in\Lambda\}.
\end{align}
\end{definition}

A point $\mathbf{t}$ lies in $\mathcal{V}(0)$ if and only if $2\ip{\mathbf{t}} {\mathbf{w}}\le \|\mathbf{w}\|_2^2$ for every nonzero $\mathbf{w}\in\Lambda$ (the Voronoi condition).

\subsection{Babai's nearest-plane algorithm}

Babai's algorithm \cite{Babai86} is the standard polynomial-time CVP approximation used in Phase 2 of the CDPR attack \cite{CDPR16}.  

\begin{definition}\label{deGS}
Given a basis $B=(\mathbf{b}_1, \dots, \mathbf{b}_r)$ of a lattice $\Lambda\subset\R^m$, the Gram-Schmidt orthogonalization (GSO) produces orthogonal vectors $\mathbf{b}_1', \dots, \mathbf{b}_r'$ as
\begin{align}\label{algso}
  \mathbf{b}_i'=\mathbf{b}_i-\sum_{j<i}\mu_{ij} \mathbf{b}_j',
  \quad
  \mu_{ij}=\frac{\ip{\mathbf{b}_i}{\mathbf{b}_j'}}{\nm{\mathbf{b}_j'}_2^2}.
\end{align}
\end{definition}

For the log-unit lattice, these norms are $\Omega(\sqrt{n})$ (Table \ref{tab:coarse}) which is large relative to $O(1)$ per-component deviation of structured targets.

The basis matrix $B\in\R^{m\times r}$ admits a QR decomposition $B=QR$, where $Q\in\R^{m\times r}$ has orthonormal columns and $R\in\R^{r\times r}$ is an upper triangular with positive diagonal. The two factorizations are related by
\begin{align}\label{alqr-gs}
  Q_i=\frac{\mathbf{b}_i'}{\nm{\mathbf{b}_i'}_2},
  \quad 
  R_{ii}=\nm{\mathbf{b}_i'}_2,
  \quad
  R_{ji}=\mu_{ij}\nm{\mathbf{b}_j'}_2 \quad (j<i).
\end{align}

\begin{definition}[Babai's nearest-plane algorithm \cite{Babai86}]\label{debabai} Given a lattice basis $B=(\mathbf{b}_1, \dots, \mathbf{b}_r)$ with GSO $(\mathbf{b}_1', \dots, \mathbf{b}_r')$ and a target $\mathbf{t}\in\R^m$, Babai's algorithm computes integer coefficients $c_r, c_{r-1}, \dots, c_1$ as 
\begin{align}\label{albabai}
  c_i=\Bigl\lfloor\frac{\ip{\mathbf{t}-\sum_{j>i}c_j \mathbf{b}_j}{\mathbf{b}_i'}}        {\nm{\mathbf{b}_i'}_2^2} \Bigr\rceil,
  \quad i=r, r-1, \dots, 1,
\end{align}
where $\lfloor\cdot\rceil$ denotes rounding to the nearest integer. The output lattice vector is given by $\mathbf{v}=\sum_{i=1}^r c_i \mathbf{b}_i$ with Babai residual $\mathbf{e}=\mathbf{t}-\mathbf{v}$.
\end{definition}

The Babai residual satisfies $\mathbf{e}=\sum_{i=1}^r (\mu_i-c_i) \mathbf{b}_i' +\mathbf{t}^{\perp}$, where $\mathbf{t}^{\perp}$ is the component of $\mathbf{t}$ which is orthogonal to $\mathrm{span}(\mathbf{b}_1, \dots,  \mathbf{b}_r)$, and $\mu_i$ is the projection coefficient computed at step $i$. Moreover, the Babai residual satisfies
\begin{align}\label{alBabaiB}
  \|\mathbf{t}-\mathbf{v}\|_2 \le \frac{\sqrt{r}}{2} \max_{1\le i\le r}
   \|\mathbf{b}_i'\|_2+ \|\mathbf{t}^{\perp}\|_2.
\end{align}
When $\Lambda$ has full column rank of $m=r$, $\mathbf{t}^{\perp}$ vanishes and the bound simplifies into
\begin{align}\label{alBabaiBf}
  \|\mathbf{t}-\mathbf{v}\|_2 \le \frac{\sqrt{r}}{2} \max_{1\le i\le r} \|\mathbf{b}_i'\|_2.
\end{align}

For the log-unit lattice $\Lambda$ with rank $r=\frac{n}{2}-1$ and GS norms $\nm{\mathbf{b}_i'}_2=\Omega(\sqrt{n})$, the Babai bound gives $\|\mathbf{e}\|_2=O(n)$. This is the worst case guarantee over all targets. For structured targets, we will show the projection coefficients $\mu_i$ are not bounded by $\frac{1}{2}$. 

Babai's algorithm costs $O(r^2)$ arithmetic operations on GS coefficients (or $O(r^2 m)$ if the vectors are in $\R^m$). For the log-unit lattice with $r=\frac{n}{2}-1$ and $m=n$, the total cost is $O(n^3)$ which might be negligible compared to the PIP step.

\subsection{Probabilistic definitions}

We collect some probabilistic definitions used in the following proofs. 

\begin{definition}\label{deCG}
A circularly symmetric complex Gaussian $Z\sim\CN(0,\sigma^2)$ is $Z=X+\ri Y$, where $X, Y\sim\mathcal{N}(0,\sigma^2/2)$ are independent. Circular symmetry means $Z$ and $e^{\ri\theta}Z$ have the same distribution for every $\theta$.
\end{definition}

The squared modulus satisfies $|Z|^2/\sigma^2\sim\Exp(1)$ (exponential with rate $1$), and $|Z|$ has Rayleigh distribution with parameter $\sigma/\sqrt{2}$. The circular symmetry of each $\sigma_j(g)=\sum c_i e^{2\pi\ri\cdot ij/m}$ is from the fact that the summands are i.i.d. terms rotated by different angles, uniformly distributed modulo $2\pi$.

\begin{definition}\cite{Vershynin18}\label{desubG}
A centered real random variable $X$ is sub-Gaussian with parameter $\sigma^2$, denoted as $X\sim\mathrm{subG}(0,\sigma^2)$, if the following condition holds 
\begin{align}\label{alsubG}
  \mathbb{E}[e^{tX}]\le e^{\sigma^2 t^2/2}\quad\text{for all }t\in\R.
\end{align}
\end{definition}

Note the sum of independent sub-Gaussian variables is sub-Gaussian, and any bounded variable with $|X|\le B$ is sub-Gaussian with parameter $B^2$ from the Hoeffding's Lemma \cite{Durrett19}.

\begin{definition}\label{dechisq}
If $W_1, \dots, W_k\overset{\text{i.i.d.}}{\sim}\mathcal{N}(0,1)$, then $\sum_{i=1}^k W_i^2$ is the $\chi$-squared distribution ($\chi_k^2$) with $k$ degrees of freedom.
\end{definition}

It satisfies $\mathbb{E}[\chi^2_k]=k$ and $\Var(\chi^2_k)=2k$. The logarithmic moments, which are central to our Trigamma Theorem, involve the digamma and trigamma functions as 
\begin{align}\label{allogchiq}
  \mathbb{E}[\log\chi^2_k]=\psi\bigl(\tfrac{k}{2}\bigr)+\log 2, \quad \Var(\log\chi^2_k)=\psi'\bigl(\tfrac{k}{2}\bigr).
\end{align}

\begin{definition} \cite{AS64}\label{detrigamma}
The digamma function is defined by $\psi(z)=\frac{\mathrm{d}}{\mathrm{d}z}\log\Gamma(z)$ and trigamma function is $\psi'(z)=\frac{\mathrm{d}^2} {\mathrm{d}z^2}\log\Gamma(z)$.
\end{definition}

Note $\psi(1)=-\gamma$, where $\gamma\approx 0.5772$ is the Euler-Mascheroni constant. Moreover, we have $\psi'(z)=\sum_{k=0}^\infty(z+k)^{-2}$, which is positive, strictly decreasing, and tends to $1/z$ for large $z$. For $z=1$, we have $\psi'(1)=\pi^2/6$.

\begin{notation}\label{not:partIII}
We fix the following notation throughout this part:
\begin{itemize}[nosep]
\item $k\ge 3$ is an integer, $n=2^{k-1}$, $m=2^k=2n$, $\zeta=e^{2\pi\ri/m}$.
\item $R=\Z[\zeta]$ is the ring of integers of $K=\Q(\zeta)$.
\item The Galois group $\Gal(K/\Q)\cong(\Z/m\Z)^\times$ acts via $\sigma_a:\zeta\mapsto\zeta^a$ for odd $a$.
\item The $n$ complex embeddings are $\sigma_j$ for odd $j\in\{1,3, \dots, m-1\}$.
\item Conjugate pairs: $\sigma_{m-j}(g)=\overline{\sigma_j(g)}$.
\item $J=\{1,3, \dots, n-1\}$ indexes one representative from each conjugate pair; $|J|=n/2$.
\item $g=\sum_{i=0}^{n-1}c_i\zeta^i$ with $c_i$ drawn i.i.d. from a real-valued distribution with mean $0$, variance $\sigma_c^2>0$, and finite fourth moment. 
\item $L(g)=(\log|\sigma_j(g)|)_{\text{odd }j}\in\R^n$ is the log-embedding.
\item $H_0=\{x\in\R^n:\sum_j x_j=0\}$ is the trace-zero hyperplane. $\Pi_{H_0}$ is orthogonal projection onto $H_0$.
\item $\Lambda=\{L(\epsilon):\epsilon\in R^\times\}\subset H_0$ is the log-unit lattice with rank $r=n/2-1$.
\item $\tO(f)$ means $O(f\cdot\mathrm{polylog}(f))$.
\end{itemize}
\end{notation}

\section{The CLT for Embeddings and Variance Identity}\label{seclt}

In this section, we prove two results for ring-based lattice cryptanalysis: (1) different canonical embeddings of a random ring element are pairwise uncorrelated and asymptotically jointly independent in the large-dimension limit; (2) for a circularly symmetric complex Gaussian random variable $Z \sim \mathcal{CN}(0,1)$, the variance of $\log|Z|$ is $\pi^2/24$.

\subsection{Embedding uncorrelatedness}

\begin{lemma}\label{lem:uncor}
Let $g=\sum_{i=0}^{n-1}c_i\zeta^i$ with $c_i$ i.i.d. real-valued, mean $0$, variance $\sigma_c^2<\infty$. Then for different odd $j,j'\in\{1,3, \dots, 2n-1\}$ we have 
\begin{align}
\E[\sigma_j(g)\overline{\sigma_{j'}(g)}]=0. 
\label{aluncor}
\end{align}
\end{lemma}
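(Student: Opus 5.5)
The plan is to expand the product $\sigma_j(g)\overline{\sigma_{j'}(g)}$ as a double sum using \eqref{alemb}, and then take expectations term by term. Writing $\omega=e^{2\pi\ri/m}$ with $m=2n$, we have
\begin{align*}
\sigma_j(g)\overline{\sigma_{j'}(g)}=\sum_{i=0}^{n-1}\sum_{i'=0}^{n-1}c_ic_{i'}\,\omega^{ij-i'j'}.
\end{align*}
Since the $c_i$ are real, i.i.d., centered and of finite variance, we have $\E[c_ic_{i'}]=\sigma_c^2\delta_{ii'}$; finite variance makes every term integrable, so linearity of expectation applies. Only the diagonal terms survive, giving
\begin{align*}
\E[\sigma_j(g)\overline{\sigma_{j'}(g)}]=\sigma_c^2\sum_{i=0}^{n-1}\omega^{i(j-j')}.
\end{align*}

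The remaining step is to show this geometric sum vanishes. Set $d=j-j'$. Since $j,j'$ are distinct odd integers in $\{1,\dots,2n-1\}$, $d$ is a nonzero even integer with $0<|d|\le 2n-2$. Writing $d=2e$, we get $\omega^{d}=e^{2\pi\ri e/n}$, an $n$-th root of unity. It differs from $1$ because $0<|e|\le n-1$, so $n\nmid e$. Hence
\begin{align*}
\sum_{i=0}^{n-1}\omega^{id}=\frac{1-\omega^{nd}}{1-\omega^{d}}=\frac{1-e^{2\pi\ri e}}{1-\omega^d}=0.
\end{align*}
This proves \eqref{aluncor}.

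The only point requiring care is the arithmetic in the last step. Summing over $i$ from $0$ to $n-1$, rather than over a full period of length $m$, still yields a complete sum of $n$-th roots of unity, and it does so exactly because $j-j'$ is even. Note also that the conjugate pair $j'=m-j$ is covered by the same argument: there $d=2j-m$ is even and nonzero.
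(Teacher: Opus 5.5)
Your proof is correct and follows the paper's argument: expand the double sum, use $\E[c_ic_{i'}]=\sigma_c^2\delta_{ii'}$ to kill the off-diagonal terms, and show that the geometric sum $\sum_{i=0}^{n-1}\zeta^{i(j-j')}$ vanishes because $j-j'$ is a nonzero even integer with $|j-j'|<m$. Your observation that $\zeta^{j-j'}$ is a nontrivial $n$-th root of unity is just a repackaging of the paper's step $\zeta^{n}=-1\Rightarrow\zeta^{n(j-j')}=1$.
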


\begin{proof}
By definition, we have $\sigma_j(g)=\sum_{i=0}^{n-1} c_i \zeta^{ij}$ for the primitive $m$-th root of unity $\zeta=e^{2\pi \mathrm{i}/m}$ with $m=2n$. For real $c_{i'} \in \mathbb{R}$, we get $\overline{\sigma_{j'}(g)}=\sum_{i'=0}^{n-1} c_{i'} \overline{\zeta^{i'j'}}=\sum_{i'=0}^{n-1} c_{i'} \zeta^{-i'j'}$. We then obtain 
\begin{align}
\E[\sigma_j(g)\overline{\sigma_{j'}(g)}] &= \E\Bigl[(\sum_{i=0}^{n-1}c_i\zeta^{ij})(\sum_{i'=0}^{n-1}c_{i'}\zeta^{-i'j'})\Bigr]   
 \notag\\
&= \sum_{i=0}^{n-1}\sum_{i'=0}^{n-1}\E[c_i c_{i'}] \zeta^{ij-i'j'}, \label{alexpand1}
\end{align}
where the second equality is from the linearity of expectation.

From the i.i.d. zero-mean of $c_i$s, we obtain $\E[c_i c_{i'}]=0$ for all $i \neq i'$, and $\E[c_i^2]=\sigma_c^2$ for all $0 \le i \le n-1$. Using this orthogonality for Eq.\eqref{alexpand1}, we get 
\begin{align}
\E[\sigma_j(g)\overline{\sigma_{j'}(g)}]
= \sigma_c^2\sum_{i=0}^{n-1}\zeta^{i(j-j')}. \label{aldiag1}
\end{align}

Let $c=j - j'$. Since $j$ and $j'$ are different odd integers, $c$ is a non-zero even integer with $0 < |c| < 2n=m$. We compute Eq.\eqref{aldiag1} using the equality of $\sum_{i=0}^{n-1}\zeta^{ic}=\frac{1-\zeta^{nc}}{1-\zeta^c}$. This holds because $\zeta^c \neq 1$, where $\zeta$ is a primitive $m$-th root of unity and $\zeta^k=1$ if and only if $m$ divides $k$. Using the definition of $\zeta$, we have $\zeta^n=e^{2\pi \mathrm{i} \cdot n/m}=e^{\pi \mathrm{i}}=-1$. Note $c$ is even, $(\zeta^n)^c=(-1)^c=1$. Thus, from Eq.\eqref{aldiag1}, we obtain the equality (\ref{aluncor}).
\end{proof}

\subsection{Central Limit Theorem}\label{sub:clt}

\begin{proposition}\label{prop:clt}
Under the conditions of Lemma \ref{lem:uncor}, suppose that $c_i$ has a finite fourth moment. Then for any fixed different odd indices $j_1, \dots, j_s$, the following result holds 
\begin{align}
  \Bigl(\frac{\sigma_{j_1}(g)}{\sqrt{n\sigma_c^2}}, \dots, 
        \frac{\sigma_{j_s}(g)}{\sqrt{n\sigma_c^2}}\Bigr) \xrightarrow{d}(Z_1, \dots, Z_s),  \quad Z_\ell\overset{\rm{i.i.d.}}{\sim}\CN(0,1).
  \label{aljclt}
\end{align}
\end{proposition}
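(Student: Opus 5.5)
The plan is to reduce the complex vector statement to a real one and combine the Cramér--Wold device with the Lyapunov form of the Lindeberg CLT. One caveat must be fixed first. If two of the indices form a conjugate pair, say $j_2=m-j_1$, then $\sigma_{j_2}(g)=\overline{\sigma_{j_1}(g)}$ deterministically, and the limit cannot consist of independent variables. So I will read the statement as requiring the $j_\ell$ to lie in distinct conjugate pairs, i.e.\ $j_\ell+j_{\ell'}\not\equiv 0 \pmod m$ for all $\ell,\ell'$. For fixed indices this is automatic once $m>2\max_\ell j_\ell$, so it holds for all large $n$ whenever the indices are taken in $J$.

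\textbf{Step 1: second moments.} Write $S_\ell=\sigma_{j_\ell}(g)/\sqrt{n\sigma_c^2}$. Lemma~\ref{lem:uncor} already gives $\E[S_\ell\overline{S_{\ell'}}]=0$ for $\ell\ne\ell'$. Its computation with $\ell=\ell'$ gives $\E|S_\ell|^2=1$. The same expansion applied to the pseudo-covariance gives
\begin{align*}
\E[S_\ell S_{\ell'}]=\frac{1}{n}\sum_{i=0}^{n-1}\zeta^{i(j_\ell+j_{\ell'})}.
\end{align*}
Here $c=j_\ell+j_{\ell'}$ is even and $c\not\equiv 0\pmod m$, by the caveat for $\ell\neq\ell'$ and because $2j_\ell<m$ for large $n$ when $\ell=\ell'$. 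Hence the same geometric-sum argument ($\zeta^{nc}=1$, $\zeta^c\ne 1$) gives $\E[S_\ell S_{\ell'}]=0$. Combining the vanishing covariances and pseudo-covariances shows, exactly for each large $n$, that the $2s$ real coordinates $(\mathrm{Re}\,S_\ell,\mathrm{Im}\,S_\ell)_\ell$ are uncorrelated, each with variance $1/2$. This is precisely the covariance of $(Z_1,\dots,Z_s)$ with $Z_\ell$ i.i.d.\ $\CN(0,1)$ as in Definition~\ref{deCG}.

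\textbf{Step 2: Cramér--Wold plus Lyapunov.} Fix real $(\alpha_\ell,\beta_\ell)_{\ell\le s}$ and set $X_n=\sum_\ell(\alpha_\ell\,\mathrm{Re}\,S_\ell+\beta_\ell\,\mathrm{Im}\,S_\ell)$. Then $X_n=\sum_{i=0}^{n-1}a_{n,i}c_i$ with
\begin{align*}
a_{n,i}=\frac{1}{\sqrt{n\sigma_c^2}}\sum_\ell\Bigl(\alpha_\ell\cos\tfrac{2\pi i j_\ell}{m}+\beta_\ell\sin\tfrac{2\pi i j_\ell}{m}\Bigr),
\end{align*}
so $|a_{n,i}|\le C/\sqrt n$ with $C=\sigma_c^{-1}\sum_\ell(|\alpha_\ell|+|\beta_\ell|)$. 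By Step~1, $\Var(X_n)=\tfrac12\sum_\ell(\alpha_\ell^2+\beta_\ell^2)$ exactly. The summands $a_{n,i}c_i$ are independent and centered. The Lyapunov ratio satisfies
\begin{align*}
\sum_i\E|a_{n,i}c_i|^4\le n\cdot\frac{C^4}{n^2}\,\E[c_0^4]\to 0,
\end{align*}
using the finite fourth moment. The Lindeberg CLT \cite{Billingsley95} therefore gives $X_n\xrightarrow{d}\mathcal N\bigl(0,\tfrac12\sum_\ell(\alpha_\ell^2+\beta_\ell^2)\bigr)$, which is the law of the same linear combination of the real and imaginary parts of the $Z_\ell$. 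If the variance is $0$, then all coefficients vanish and $X_n\equiv 0$, so that case is trivial. The Cramér--Wold device then yields \eqref{aljclt}.

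\textbf{Main obstacle.} The analytic part is routine. The real point is Step~1: one must verify that the pseudo-covariances vanish as well as the Hermitian covariances of Lemma~\ref{lem:uncor}. This is exactly where the conjugate-pair restriction enters, and where $j_\ell$ fixed while $m\to\infty$ is used to rule out $2j_\ell\equiv 0$ or $j_\ell+j_{\ell'}\equiv 0 \pmod m$.
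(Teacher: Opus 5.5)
Your proof is correct and follows the same route as the paper: exact second moments via the geometric sum, the Cram\'er--Wold device, and a one-dimensional Lindeberg CLT. You verify the CLT condition through Lyapunov with the fourth moment, whereas the paper checks Lindeberg directly by dominated convergence, which needs only the second moment. Your explicit check that the pseudo-covariances $\E[S_\ell S_{\ell'}]$ vanish, together with the conjugate-pair caveat, supplies a step the paper skips when it passes from the Hermitian covariance $\delta_{\ell\ell'}$ to the real covariance $\Sigma_n^{(2s)}=\frac12 I_{2s}$.
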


\begin{proof}
For each embedding index $\ell \in \{1, \dots, s\}$ and coefficient index $i \in \{0, 1, \dots, n-1\}$,  define $Y_{i,\ell}=c_i \zeta^{ij_\ell}/\sqrt{n\sigma_c^2}$, where $\zeta=e^{2\pi \mathrm{i}/m}$ with $m=2n$. We decompose the normalized embedding as
\begin{align}\label{alembedding-sum}
  \frac{\sigma_{j_\ell}(g)}{\sqrt{n\sigma_c^2}}
 =\frac{1}{\sqrt{n\sigma_c^2}} \sum_{i=0}^{n-1}c_i \zeta^{ij_\ell}
 =\sum_{i=0}^{n-1}Y_{i,\ell}.
\end{align}

For each $i$, define the $s$-dimensional complex random vector as
\begin{align}\label{alvector-summand}
  \mathbf{Y}_i
 =(Y_{i,1}, Y_{i,2}, \dots, Y_{i,s})
 =\frac{c_i}{\sqrt{n\sigma_c^2}}(\zeta^{ij_1}, \zeta^{ij_2}, \dots, \zeta^{ij_s}).
\end{align}
The factors $\zeta^{ij_\ell}$ ($\ell=1, \dots, s$) are deterministic complex numbers of modulus 1. So, $\mathbf{Y}_i$ is a deterministic vector in $\mathbb{C}^s$ which is scaled by the random scalar $c_i/\sqrt{n\sigma_c^2}$. All vectors $\mathbf{Y}_0, \dots, \mathbf{Y}_{n-1}$ are mutually independent. 

Since $\E[c_i]=0$ implies $\E[\mathbf{Y}_i]=\mathbf{0}$, for the complex covariance $\Cov(U,V)=\E\left[U\overline{V}\right]$ with centered random variables $U$ and $V$, we get
\begin{align}\label{alcov-sum}
  \Cov\Bigl(\sum_{i=0}^{n-1} Y_{i,\ell}, \sum_{i=0}^{n-1} Y_{i,\ell'}\Bigr)
 =\sum_{i=0}^{n-1} \E [Y_{i,\ell} \overline{Y_{i,\ell'}}].
\end{align}
From Eq.\eqref{alcov-sum}, we obtain 
\begin{align}\label{alcov-expand}
  \E[Y_{i,\ell} \overline{Y_{i,\ell'}}]
= \E\left[\frac{c_i \zeta^{ij_\ell}}{\sqrt{n\sigma_c^2}}
       \cdot \frac{c_i \zeta^{-ij_{\ell'}}}{\sqrt{n\sigma_c^2}}\right]
= \frac{\E[c_i^2]}{n\sigma_c^2} \cdot \zeta^{i(j_\ell-j_{\ell'})}=\frac{1}{n} \cdot \zeta^{i(j_\ell-j_{\ell'})},
\end{align}
where we have used the equality $\overline{\zeta^{ij_{\ell'}}}=\zeta^{-ij_{\ell'}}$ and $\E[c_i^2]=\sigma_c^2$. We then obtain 
\begin{align}\label{alcovc}
  \sum_{i=0}^{n-1} \E[Y_{i,\ell} \overline{Y_{i,\ell'}}]=\frac{1}{n}\sum_{i=0}^{n-1}\zeta^{i(j_\ell-j_{\ell'})}.
\end{align}

In what follows, we evaluate the sum (\ref{alcovc}) in two cases. For $\ell=\ell'$, we have $j_\ell - j_{\ell'}=0$, so $\zeta^{i(j_\ell-j_{\ell'})}=\zeta^0=1$ for all $i$s. Moreover, from Eq.\eqref{alcovc}, we get $\frac{1}{n}\sum_{i=0}^{n-1} 1=1$; For $\ell \neq \ell'$, let $c=j_\ell - j_{\ell'}$. $c$ is a non-zero even integer. Similar to the proof of Lemma \ref{lem:uncor}, we have
\begin{align}\label{algeomrepeat}
  \sum_{i=0}^{n-1}\zeta^{ic}
 =\frac{1-\zeta^{nc}}{1-\zeta^c}
 =\frac{1-(-1)^c}{1-\zeta^c}
 =0,
\end{align}
where the second equality is from $\zeta^n=-1$, while the third equality holds as $c$ is even.

Combining with two cases, we obtain the covariance matrix of $\sum_{i=0}^{n-1} \mathbf{Y}_i$ as
\begin{align}\label{alcov-identity}
  \Sigma_n=(\Sigma_n)_{\ell,\ell'}=\delta_{\ell,\ell'} \cdot I_s,
\end{align}
where $\delta_{\ell,\ell'}$ is the Kronecker delta, and $I_s$ is rank $s$ identity matrix. 

For the $2s$-dimensional real CLT, the Lindeberg condition \cite{Billingsley95} requires for every fixed $\varepsilon > 0$,
\begin{align}\label{allindeberg-def}
  L_n(\varepsilon):= \sum_{i=0}^{n-1}
     \E[\|\mathbf{Y}_i\|_2^2 \cdot\mathbf{1}_{\{\|\mathbf{Y}_i\|_2>\varepsilon\}}]
  \to 0 \text{ as } n\to\infty,
\end{align}
where $\|\mathbf{Y}_i\|_2^2=\sum_{\ell=1}^s |Y_{i,\ell}|^2$, and $\mathbf{1}_{\{\cdot\}}$ is the indicator function.

Using Eq.\eqref{alvector-summand}, we firstly simplify the norm as
\begin{align}\label{alnorm-Yi}
 \|\mathbf{Y}_i\|_2^2=\sum_{\ell=1}^s |Y_{i,\ell}|^2 
 =\sum_{\ell=1}^s \frac{c_i^2}{n\sigma_c^2} \cdot |\zeta^{ij_\ell}|^2
 =\frac{s c_i^2}{n\sigma_c^2},
\end{align}
where the final equality is from $|\zeta^{ij_\ell}|=1$ for all $i, \ell$. From Eq.\eqref{allindeberg-def}, we then get
\begin{align}\label{allindeberg-compute}
  L_n(\varepsilon)&= \sum_{i=0}^{n-1}
     \E\Bigl[\frac{s c_i^2}{n\sigma_c^2}\cdot\mathbf{1}_{\{|c_i|> \varepsilon\sqrt{n\sigma_c^2/s}\}} \Bigr]
  \notag\\
   &= \frac{s}{\sigma_c^2} \cdot
     \E\Bigl[ c_0^2 \cdot\mathbf{1}_{\{|c_0|> \varepsilon\sqrt{n\sigma_c^2/s}\}}\Bigr].
\end{align}

We now prove $L_n(\varepsilon) \to 0$ when $n \to \infty$. Define $f_n(\omega)=c_0(\omega)^2 \cdot \mathbf{1}_{\{|c_0(\omega)|> \varepsilon\sqrt{n\sigma_c^2/s}\}}$. We can show the convergence via the Lebesgue Dominated Convergence Theorem, using the following facts:
\begin{itemize}[nosep]
\item For every outcome $\omega$ with finite $c_0(\omega)$ (which holds almost surely, as $c_0$ has a finite fourth moment), $\varepsilon\sqrt{n\sigma_c^2/s} \to \infty$ when $n \to \infty$. 
\item For all $n$, we have $|f_n(\omega)| \le c_0(\omega)^2$ almost surely. 
\end{itemize}
So,  we obtain $\E[f_n] \to 0$ when $n \to \infty$. Since $s$ and $\sigma_c^2$ are fixed constants independent of $n$, we have $L_n(\varepsilon)=\frac{s}{\sigma_c^2} \E[f_n] \to 0$. 

We now use Cramér–Wold Theorem \cite{CramerWold36,Billingsley99}. In our setting, we identify $\mathbb{C}^s$ with $\mathbb{R}^{2s}$ via the mapping $Z_\ell \mapsto (\Re(Z_\ell), \Im(Z_\ell))$ for $\ell=1, \dots, s$. For any fixed real vector $\mathbf{a}=(a_1, b_1, a_2, b_2, \dots, a_s, b_s) \in \mathbb{R}^{2s}$, define $T_n:= \sum_{\ell=1}^s (a_\ell \Re(S_{n,\ell}) + b_\ell \Im(S_{n,\ell}))$, where $S_{n,\ell}=\sum_{i=0}^{n-1} Y_{i,\ell}$ is the $\ell$-th component. Decompose $T_n$ into a sum of $n$ independent real random variables as
\begin{align}\label{alcw-summand}
  T_n=\sum_{i=0}^{n-1} W_i,  \quad 
  W_i=\sum_{\ell=1}^s (a_\ell \Re(Y_{i,\ell}) + b_\ell \Im(Y_{i,\ell})).
\end{align}

By the Cauchy–Schwarz inequality, we obtain
\begin{align}\label{alcw-bound}
  \sum_{i=0}^{n-1}
  \E[W_i^2 \cdot \mathbf{1}_{\{|W_i|>\varepsilon\}}]
  \le \|\mathbf{a}\|_2^2 \cdot L_n(\varepsilon/\|\mathbf{a}\|_2)
  \to 0 \quad \text{as } n \to \infty.
\end{align}
This means the one-dimensional Lindeberg condition holds for every linear combination $T_n$.

We next compute the variance of $T_n$. Using $\Sigma_n^{(2s)}=\frac{1}{2}I_{2s}$, we obtain 
\begin{align}\label{alcw-variance}
  \Var(T_n)=\mathbf{a}^T \Sigma_n^{(2s)} \mathbf{a}
 =\frac{1}{2}\|\mathbf{a}\|_2^2.
\end{align}
From the one-dimensional Lindeberg CLT \cite{Petrov95}, we get
\begin{align}
T_n \xrightarrow{d} \mathcal{N}\Bigl(0, \frac{1}{2}\nm{\mathbf{a}}_2^2\Bigr)  
\end{align}
when $n \to \infty$. 

Now, let $\mathbf{Z}=(Z_1,  \dots, Z_s)$ be a random vector with i.i.d. components $Z_\ell \sim \mathcal{CN}(0,1)$. For the same linear combination $\mathbf{a}$, $T=\sum_{\ell=1}^s (a_\ell \Re(Z_\ell) + b_\ell \Im(Z_\ell))$ has the distribution $\mathcal{N}(0, \frac{1}{2}\nm{\mathbf{a}}_2^2)$. Thus, $T_n \xrightarrow{d} T$ for every fixed $\mathbf{a} \in \mathbb{R}^{2s}$. Using the Cramér–Wold Theorem \cite{CramerWold36,Billingsley99}, we get the joint convergence in distribution \eqref{aljclt}.

This implies each component $\Re(Z_\ell)$ and $\Im(Z_\ell)$ are marginally distributed as $\mathcal{N}(0, \frac{1}{2})$. All $2s$ real components are pair-wise uncorrelated. For jointly Gaussian random variables, this implies the uncorrelatedness is equivalent to independence. Therefore, for each $\ell$, we have $\Re(Z_\ell)$ and $\Im(Z_\ell)$ are independent, so $Z_\ell=\Re(Z_\ell) + \mathrm{i}\Im(Z_\ell) \sim \mathcal{CN}(0,1)$. Moreover, for different $\ell \neq \ell'$, $Z_\ell$ and $Z_{\ell'}$ are independent. Thus, the limiting random variables $Z_1, Z_2, \dots, Z_s$ are i.i.d. $\mathcal{CN}(0,1)$. This completes the proof.
\end{proof}

For the parameter set $n=256$ (ML-KEM standard), using Berry–Esséen Theorem \cite{Berry1941,Esseen1942} gives a uniform CDF approximation error of $O(\E[|c_0|^3]/(\sigma_c^3\sqrt{n}))$, which evaluates to approximately $0.038$ for typical coefficient distributions. This means the Gaussian approximation $\sigma_j(g)/\sqrt{n\sigma_c^2} \approx Z_j$ is accurate to within roughly $4\%$ in CDF. This is sufficient for the variance and maximum norm computations in Theorems \ref{thm:A} and \ref{thm:B}.

\subsection{The variance identity}\label{sub:variance}

The second result is the exact computation of $\Var(\log|Z|)$ for a circularly symmetric complex Gaussian random variable $Z \sim \mathcal{CN}(0,\sigma^2)$. 

\begin{lemma}\label{lem:vari}
Let $Z\sim\CN(0,\sigma^2)$. Then for every $\sigma^2>0$ we have 
\begin{enumerate}[label=(\roman*)]
\item $\E[\log|Z|]=\frac{1}{2}\log\sigma^2-\frac{\gamma}{2}$, where $\gamma\approx 0.5772$ is the Euler–Mascheroni constant.
\item $\Var(\log|Z|)=\frac{\pi^2}{24}\approx 0.4112$.
\end{enumerate}
\end{lemma}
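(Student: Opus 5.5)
The plan is to reduce everything to the exponential distribution. By Definition \ref{deCG} and the remark after it, $W:=|Z|^2/\sigma^2\sim\Exp(1)$, so that
\begin{align*}
  \log|Z|=\tfrac12\log\sigma^2+\tfrac12\log W .
\end{align*}
Hence it suffices to show $\E[\log W]=-\gamma$ and $\Var(\log W)=\pi^2/6$. Then (i) follows by linearity, and (ii) follows because the additive constant $\tfrac12\log\sigma^2$ drops out of the variance while the factor $\tfrac12$ contributes $\tfrac14$. This gives $\Var(\log|Z|)=\tfrac14\cdot\tfrac{\pi^2}{6}=\tfrac{\pi^2}{24}$, independent of $\sigma^2$.

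To compute the log-moments of $W$, I would use the Mellin transform
\begin{align*}
  M(s)=\E[W^{s}]=\int_0^\infty w^{s}e^{-w}\,\mathrm{d}w=\Gamma(1+s),
\end{align*}
which is finite for $s>-1$. Differentiating under the integral at $s=0$ gives $\E[\log W]=\Gamma'(1)$ and $\E[(\log W)^2]=\Gamma''(1)$. Equivalently, the cumulant generating function $\log M(s)=\log\Gamma(1+s)$ has first and second derivatives at $0$ equal to $\psi(1)=-\gamma$ and $\psi'(1)=\pi^2/6$, using the facts recorded after Definition \ref{detrigamma}. As a consistency check, this matches \eqref{allogchiq} with $k=2$: $|Z|^2\cdot 2/\sigma^2\sim\chi^2_2$, so $\E[\log\chi^2_2]=\psi(1)+\log 2$ and $\Var(\log\chi^2_2)=\psi'(1)$. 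One could in fact cite \eqref{allogchiq} directly, but the Mellin computation is self-contained.

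The main obstacle is justifying differentiation under the integral sign twice at $s=0$. The derivatives are $\int w^{s}(\log w)^p e^{-w}\,\mathrm{d}w$ for $p=1,2$. For $|s|\le 1/2$ the integrand is dominated by
\begin{align*}
  |\log w|^p\bigl(w^{-1/2}+w^{1/2}\bigr)e^{-w},
\end{align*}
which is integrable both near $0$ and at $\infty$, so dominated convergence applies. It remains to evaluate $\psi(1)=-\gamma$ and $\psi'(1)=\sum_{k\ge0}(k+1)^{-2}=\zeta(2)=\pi^2/6$, both already stated in the preliminaries, which completes both parts.
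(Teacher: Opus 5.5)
Your proposal is correct and follows essentially the same route as the paper: reduce to $W=|Z|^2/\sigma^2\sim\Exp(1)$, use $\E[W^s]=\Gamma(1+s)$, and differentiate at $s=0$. Reading off the mean and variance as $\psi(1)=-\gamma$ and $\psi'(1)=\pi^2/6$ from the cumulant generating function $\log\Gamma(1+s)$ is a cleaner packaging of the paper's $\Gamma''(1)-\Gamma'(1)^2$ step, and your dominated-convergence bound supplies the justification for differentiating under the integral that the paper leaves implicit.
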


\begin{proof}
For $Z=X + \mathrm{i}Y \sim \mathcal{CN}(0,\sigma^2)$, $X$ and $Y$ are independent and identically distributed as $\mathcal{N}(0, \sigma^2/2)$. Thus, $|Z|^2=X^2 + Y^2$ has a scaled $\chi$-squared distribution with 2 degrees of freedom. 

Define $W:= \frac{|Z|^2}{\sigma^2}$. Note $X^2/(\sigma^2/2) \sim \chi^2_1$ and $Y^2/(\sigma^2/2) \sim \chi^2_1$ are independent, so $|Z|^2/(\sigma^2/2) \sim \chi^2_2$. The $\chi^2_2$ distribution has density $\frac{1}{2}e^{-t/2}\mathbf{1}_{\{t>0\}}$. 

We then express $\log|Z|$ in terms of $W$ as
\begin{align}\label{allogZ-W}
  \log|Z|= \frac{1}{2}\log|Z|^2
 =\frac{1}{2}(\sigma^2 W)
 =\frac{1}{2}\log\sigma^2 + \frac{1}{2}\log W.
\end{align}
The term $\frac{1}{2}\log\sigma^2$ is a deterministic constant. Thus, we obtain
\begin{align}
  \E[\log|Z|]
  &= \frac{1}{2}\log\sigma^2 + \frac{1}{2}\E[\log W],
  \label{almeanr}\\[4pt]
  \Var(\log|Z|)
  &= \Var\Bigl(\frac{1}{2}\log W\Bigr)
 =\frac{1}{4}\Var(\log W).
  \label{alvarreduce}
\end{align}
For $W\sim \mathrm{Exp}(1)$ and any real $s > -1$, we have
\begin{align}\label{alWs-moment}
  \E[W^s]=\int_0^\infty w^s e^{-w} dw=\Gamma(s+1).
\end{align}

For $s > -1$, we have
\begin{align}
  \frac{d}{ds}\E[W^s] =\frac{d}{ds}\int_0^\infty w^s e^{-w} dw =\E[W^s \log W].
\end{align}
For $s=0$, we obtain
\begin{align}
  \E[\log W]=\frac{d}{ds}\Gamma(s+1)\big|_{s=0} =\Gamma'(1).
  \label{alElogW}
\end{align}
Differentiating with second time, we get
\begin{align}
  \E[(\log W)^2]
 =\frac{d^2}{ds^2}\E[W^s]\big|_{s=0}
 =\frac{d^2}{ds^2}\Gamma(s+1)\big|_{s=0}
 =\Gamma''(1).
  \label{alElogW2}
\end{align}

The digamma function $\psi(s)$ is given as 
\begin{align}
  \psi(s)
 =\frac{d}{ds}\log\Gamma(s)
 =\frac{\Gamma'(s)}{\Gamma(s)}.
  \label{aldgamma}
\end{align}
At $s=1$, we have $\Gamma(1)=\int_0^\infty e^{-w} dw=1$. So,
\begin{align}
  \Gamma'(1)=\Gamma(1) \cdot \psi(1)=\psi(1).
  \label{alGammap1}
\end{align}

From the Weierstrass product representation of Gamma functions \cite{Whittaker27,DLMF}, combining with Eqs.\eqref{alGammap1} and \eqref{alElogW2}, we get $\E[\log W]=-\gamma$. From Eq.\eqref{almeanr}, we obtain
\begin{align}
  \E[\log|Z|]
 =\frac{1}{2}\log\sigma^2 + \frac{1}{2}(-\gamma)
 =\frac{1}{2}\log\sigma^2 - \frac{\gamma}{2}.
  \label{almeanre}
\end{align}
This proves part (i).

To compute the second moment $\E[(\log W)^2]=\Gamma''(1)$, we use the product rule. In fact, we have 
\begin{align}
  \Var(\log W)=\E[(\log W)^2] - (\E[\log W])^2
= \Gamma''(1) - (\Gamma'(1))^2
= \frac{\pi^2}{6}.
  \label{alVarlogW}
\end{align}

From Eq.\eqref{alvarreduce}, we get
\begin{align}
  \Var(\log|Z|)=\frac{1}{4}\Var(\log W) \approx 0.4112.
  \label{alVarlogZ}
\end{align}
This completes the proof of part (ii).
\end{proof}

\section{The CVP Distance Theorems}\label{secvp}

In this section, we prove three results for the structured closest vector problem (CVP) over the log-unit lattice. 

\subsection{The exact $L^2$ distance constant}\label{sub:thmA}

This subsection proves the precise asymptotic $L^2$ distance. 

\begin{theorem}\label{thm:A}
Let $g=\sum c_i\zeta^i\in R$ with $c_i$ i.i.d., mean $0$, variance $\sigma_c^2>0$, and finite fourth moment. Conditioned on $g\neq 0$, we have 
\begin{align}
  \frac{1}{n}\|\Pi_{H_0}(L(g))\|_2^2
  \xrightarrow{p}\frac{\pi^2}{24} 
\end{align}
when $n\to\infty$. Equivalently, $\frac{1}{\sqrt{n}}\|\Pi_{H_0}(L(g))\|_2\xrightarrow{p}
\frac{\pi}{2\sqrt{6}}\approx 0.6413$, where $\xrightarrow{p}$ denotes convergence in probability.
\end{theorem}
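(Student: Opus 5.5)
We will prove the theorem by reducing the squared norm to an empirical variance over the $n/2$ conjugate-pair representatives and then applying a weak law of large numbers. Write $X_j=\log|\sigma_j(g)|-\tfrac12\log(n\sigma_c^2)$ for $j\in J$. Since $\sigma_{m-j}(g)=\overline{\sigma_j(g)}$, each $X_j$ occurs twice in $L(g)$. The projection \eqref{alproj} is invariant under adding a constant to every coordinate, so the normalization constant drops out and we get
\begin{align*}
\frac1n\|\Pi_{H_0}(L(g))\|_2^2=\frac{2}{n}\sum_{j\in J}(X_j-\bar X)^2=\frac{2}{n}\sum_{j\in J}X_j^2-\bar X^2,\qquad \bar X=\frac{2}{n}\sum_{j\in J}X_j .
\end{align*}
It therefore suffices to show $\frac2n\sum_{j\in J}X_j\xrightarrow{p}-\gamma/2$ and $\frac2n\sum_{j\in J}X_j^2\xrightarrow{p}\pi^2/24+\gamma^2/4$. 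The limit would then be $\Var(\log|Z|)=\pi^2/24$ for $Z\sim\CN(0,1)$, by Lemma~\ref{lem:vari}.

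We establish both averages by a second-moment (Chebyshev) argument. For the first moments, fix $j\in J$. By Proposition~\ref{prop:clt} with $s=1$, $\sigma_j(g)/\sqrt{n\sigma_c^2}\to Z$ in distribution, and this holds uniformly in $j$ because the covariance computation and the Lindeberg bound do not depend on $j$. Since $\log|\cdot|$ is continuous away from $0$, we get $X_j\to\log|Z|$ in distribution. Upgrading this to convergence of $\E[X_j]$ and $\E[X_j^2]$ requires uniform integrability of $|X_j|^{2+\delta}$. The upper tail is easy: $\E|\sigma_j(g)|^4/(n\sigma_c^2)^2$ is bounded using the finite fourth moment. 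The lower tail needs a uniform small-ball bound $\Pr(|\sigma_j(g)|\le t\sqrt n)\lesssim t^{c}+o(1)$, together with a separate treatment of the event $\sigma_j(g)=0$, which is excluded by conditioning on $g\neq0$ (then $\sigma_j(g)\ne0$). For the averages, we then need $\Cov(X_j^p,X_{j'}^p)\to0$ uniformly over pairs $j\ne j'$, for $p=1,2$. We get this from the joint CLT with $s=2$, which gives asymptotic independence of $(\sigma_j,\sigma_{j'})$ by Lemma~\ref{lem:uncor} and Proposition~\ref{prop:clt}, plus the same uniform integrability. The variance of each average is then $O(1/n)+\max_{j\ne j'}|\Cov|\to0$, and Chebyshev's inequality gives the two limits. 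Combining them with the continuous mapping theorem yields $\pi^2/24$, and taking square roots gives the constant $\pi/(2\sqrt6)$.

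The main obstacle is the lower-tail control of $\log|\sigma_j(g)|$, i.e.\ anti-concentration of $\sigma_j(g)$ near $0$, uniformly in $j$ and for pairs $(j,j')$. Proposition~\ref{prop:clt} gives only distributional convergence for fixed index sets, while we need uniformity over all $n/2$ indices and pairs. Our first attempt would be a quantitative multivariate Berry--Esseen bound, which is valid because the summands $c_i(\zeta^{ij},\zeta^{ij'})$ have uniformly bounded third moments and identity covariance. It controls small balls down to radius about $n^{-1/2}\cdot$const. Below that radius, we would use that for integer coefficients $|\sigma_j(g)|$ cannot be too small when $g\ne0$: since $|\Nm(g)|\ge1$ and all embeddings are $O(n)$ with high probability, we get $\log|\sigma_j(g)|\ge -O(n\log n)$. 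This bound is crude, but it would be combined with the $n^{-1/2}$-scale small-ball probability, or with a Hölder bound on $\E[|\log|\sigma_j||\,\mathbf 1_{\text{small}}]$, to show this region contributes $o(1)$ to the averages. If uniform pairwise decorrelation proves too delicate, the fallback is to truncate: replace $X_j$ by $\max(X_j,-M)$, prove the theorem for the truncated quantities via the CLT, and show that the truncation error is small in $L^1$ as $M\to\infty$ uniformly in $n$.
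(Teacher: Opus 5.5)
Your architecture is the paper's. You use the identity $\frac1n\|\Pi_{H_0}(L(g))\|_2^2=\frac2n\sum_{j\in J}X_j^2-\bar X^2$ and then a weak law for the two averages, built from the CLT, uniform integrability, pairwise decorrelation and Chebyshev. The upper tail (via the fourth moment), uniformity in $j$ (the CLT holds along any sequence of index choices), decorrelation via the $s=2$ CLT, and the final Chebyshev step are sound.

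The gap is where you located it, the lower tail of $\log|\sigma_j(g)|$, and your proposed repair does not close it. Berry--Esseen gives, uniformly in $j$, $\Pr(|\sigma_j(g)|\le t\sqrt{n\sigma_c^2})\le Ct^2+Cn^{-1/2}$, so it is uninformative below normalized radius $t\approx n^{-1/4}$. On that residual event, of probability up to order $n^{-1/2}$ per index, your only information is the algebraic floor from $\Nm(g)\ge 1$, namely $|X_j|\lesssim n\log\sum_i|c_i|=O(n\log n)$. The residual contribution to $\E[X_j^2]$, or to $\E[\frac2n\sum_{j\in J}X_j^2\mathbf{1}_{\mathrm{small}}]$, is then bounded only by $O(n^{-1/2}\cdot n^2\log^2 n)$, which diverges. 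Hölder does not help, because the only non-circular $L^p$ bound on $X_j$ is that same $O(n\log n)$. Nor can you argue that w.h.p.\ no index lands in the residual region: the Gaussian prediction is of order $\sqrt n$ indices $j\in J$ with $|\sigma_j(g)|\le n^{1/4}$. Your truncation fallback needs $\sup_n\E[X_j^2\mathbf{1}\{X_j<-M\}]\to0$ as $M\to\infty$, which is the same estimate, so it is not an independent route.

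What is missing is an anti-concentration input for $g(\zeta^j)$ at scales far below the reach of Berry--Esseen. One sufficient form: uniformly in $j$, $\Pr(|\sigma_j(g)|\le\epsilon\sqrt{n\sigma_c^2})\le C\epsilon^2+\eta_n$ for all $\epsilon$ down to the algebraic floor, with $\eta_n=o((n\log n)^{-2})$. Another: a minimum-modulus bound $\min_{|z|=1}|g(z)|\ge n^{-A}$ with probability $1-o(1)$, in the spirit of Konyagin--Schlag's lower bounds for random $\pm1$ polynomials on the unit circle; this would have to be proved for your coefficient class. With either input, all $|X_j|=O(\log n)$ w.h.p., the residual region contributes $O(n^{-1/2}\log^2 n)$, and your truncation argument goes through.

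Do not expect to borrow this from the paper. Its Step 5 asserts, via Marcinkiewicz--Zygmund, that $\E|\sigma_j(g)/\sqrt{n\sigma_c^2}|^{-4}$ is bounded uniformly in $n$. Marcinkiewicz--Zygmund controls only positive moments, and the claim is false. The limit already has $\E|Z|^{-4}=\int_0^\infty 2r^{-3}e^{-r^2}\,dr=\infty$, so by weak convergence the $\liminf$ of these negative moments is infinite.
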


\begin{proof}
The proof proceeds in seven steps. 

\textit{Step 1.} For each odd index $j\in\{1, 3, \dots, m-1\}$, define the log-modulus $t_j=\log|\sigma_j(g)|$, and let $\bar{t}=\frac{1}{n}\sum_{j \text{ odd}} t_j$ denote the average of these log-moduli. Using the definition of hyperplane $H_0$, we obtain
\begin{align}
  \Pi_{H_0}(L(g))=(t_j - \bar{t})_{j \text{ odd}} \in H_0.
  \label{alprojf}
\end{align}
From this representation, we get
\begin{align}
  \|\Pi_{H_0}(L(g))\|_2^2
 =\sum_{j \text{ odd}} (t_j - \bar{t})^2.
  \label{alprojsq}
\end{align}

\textit{Step 2.} For 2-power cyclotomic fields, the canonical embedding satisfies the relation $\sigma_{m-j}(g)=\overline{\sigma_j(g)}$. Taking moduli on both sides gives $|\sigma_{m-j}(g)|=|\sigma_j(g)|$, and hence $t_{m-j}=t_j$ from the definition of $t_j$. This implies the $n$ coordinates of $L(g)$ form $n/2$ identical pairs. So, for each pair $(j, m-j)$ its contribution to Eq.\eqref{alprojsq} is 
\begin{align}
(t_j - \bar{t})^2 + (t_{m-j} - \bar{t})^2=2(t_j - \bar{t})^2.
\end{align}

Let $J=\{1, 3, \dots, n-1\}$ with $|J|=n/2$. From Eq.\eqref{alprojsq}, we obtain
\begin{align}
  \sum_{j \text{ odd}} (t_j - \bar{t})^2
 =\sum_{j\in J} [(t_j - \bar{t})^2 + (t_{m-j} - \bar{t})^2]
 =2\sum_{j\in J} (t_j - \bar{t})^2.
  \label{alconjsp}
\end{align}

Using the same conjugate symmetry, we rewrite $\bar{t}$ as
\begin{align}
  \bar{t}
 =\frac{1}{n}\sum_{j \text{ odd}} t_j
 =\frac{1}{n}\sum_{j\in J} (t_j + t_{m-j})
 =\frac{2}{n}\sum_{j\in J} t_j.
  \label{almeanpar}
\end{align}

\medskip
\textit{Step 3.} By Proposition \ref{prop:clt}, $\sigma_j(g)/\sqrt{n\sigma_c^2}$ converges in distribution to $Z_j \sim \mathcal{CN}(0,1)$. Applying the continuous mapping theorem to the function $\log|\cdot|: \mathbb{C}\setminus\{0\} \to \mathbb{R}$, we obtain
\begin{align}
  \log|\sigma_j(g)| =\log\Bigl|\sqrt{n\sigma_c^2} \cdot \frac{\sigma_j(g)}{\sqrt{n\sigma_c^2}}\Bigr|
  \xrightarrow{d} \frac{1}{2}\log(n\sigma_c^2) + \log|Z_j|.
  \label{altjCLT}
\end{align}

From Lemma \ref{lem:vari}(i), we have $\E[\log|Z_j|]=-\gamma/2$. Define the centered random variable as 
\begin{align}
  X_j:= t_j - \frac{1}{2}\log(n\sigma_c^2) + \frac{\gamma}{2}
  \label{alXj}
\end{align}
so that $X_j \xrightarrow{d} \log|Z_j| + \gamma/2$. We compute
\begin{align}
  \E[\log|Z_j| + \frac{\gamma}{2}]=0, \quad 
  \Var(\log|Z_j| + \frac{\gamma}{2}) =\Var(\log|Z_j|)=\frac{\pi^2}{24},
  \label{alXjmoments}
\end{align}
where the second equality is from Lemma \ref{lem:vari}(ii). 

\medskip
\textit{Step 4.} From the definition \eqref{alXj}, we get $t_j=X_j + \frac{1}{2}\log(n\sigma_c^2) - \frac{\gamma}{2}$. Combining with Eq.\eqref{almeanpar} implies
\begin{align}
  \bar{t}  &= \frac{2}{n}\sum_{j\in J} t_j
 =\frac{2}{n}\sum_{j\in J} (X_j + \frac{1}{2}\log(n\sigma_c^2) - \frac{\gamma}{2})
  \notag\\
  &= \frac{1}{2}\log(n\sigma_c^2) - \frac{\gamma}{2} + \bar{X},
  \label{altbarX}
\end{align}
where we have used the sample mean $\bar{X}:= \frac{2}{n}\sum_{j\in J} X_j$ and $\frac{2}{n}$ is from the definition of $\bar{t}$.

From Eq.\eqref{alconjsp}, we rewrite the squared projected norm as
\begin{align}
  \|\Pi_{H_0}(L(g))\|_2^2
 =2\sum_{j\in J} (X_j - \bar{X})^2.
  \label{alnormX}
\end{align}
To decompose this sum, we use the standard algebraic identity for sample variance as 
\begin{align}
  \sum_{j=1}^N (X_j - \bar{X})^2
 =\sum_{j=1}^N X_j^2 - N\bar{X}^2.
  \label{alsamid}
\end{align}
Using $N=|J|=n/2$ and $\bar{X}=\frac{1}{N}\sum_{j\in J} X_j$, we get
\begin{align}
  \sum_{j\in J} (X_j - \bar{X})^2
 =\sum_{j\in J} X_j^2 - \frac{n}{2} \cdot \bar{X}^2.
  \label{alsampp}
\end{align}

Dividing both sides of Eq.\eqref{alnormX} by $n$ and using Eq.\eqref{alsampp}, we obtain
\begin{align}
  \frac{1}{n}\|\Pi_{H_0}(L(g))\|_2^2
 =\frac{2}{n}\sum_{j\in J}(X_j - \bar{X})^2 =\frac{2}{n}\sum_{j\in J} X_j^2 - \bar{X}^2.
  \label{alkecomp}
\end{align}

\textit{Step 5.} By Proposition \ref{prop:clt}, for any fixed finite collection of different indices $j_1, \dots, j_s$, the random vector $(X_{j_1}, \dots, X_{j_s})$ converges jointly in distribution to a vector of i.i.d. copies of $\log|Z| + \gamma/2$. The vector $(X_{j_1}^2,  \dots, X_{j_s}^2)$ then converges jointly in distribution to i.i.d. copies of $(\log|Z| + \gamma/2)^2$.

We now compute expectation using the eqaulity $\E[Y^2]=\Var(Y) + (\E[Y])^2$ as 
\begin{align}
  \E\Bigl[(\log|Z| + \frac{\gamma}{2})^2\Bigr]
 =\Var(\log|Z| + \frac{\gamma}{2}) + \Bigl(\E[\log|Z| + \frac{\gamma}{2}]\Bigr)^2
 =\frac{\pi^2}{24},
  \label{alEXj2}
\end{align}
where the second equality is from the fact $\log|Z| + \gamma/2$ has mean zero and variance $\pi^2/24$ from Step 3 and Lemma \ref{lem:vari}.

Note $X_j=t_j - \frac{1}{2}\log(n\sigma_c^2) + \gamma/2$ with $t_j=\log|\sigma_j(g)|$. Combining the finite fourth moment of $c_i$ with the Marcinkiewicz–Zygmund Inequality \cite{Marcinkund37,dePenaGine99}, we obtain that $\E[|\sigma_j(g)/\sqrt{n\sigma_c^2}|^{-4}]$ is uniformly bounded over $n$, which further implies $\sup_n \E[X_j^4] < \infty$. For uniformly integrable triangular arrays \cite{Durrett19}, we then obtain
\begin{align}
  S_n=\frac{1}{n/2}\sum_{j\in J} X_j^2
  \xrightarrow{p} \E\Bigl[(\log|Z| + \frac{\gamma}{2})^2\Bigr]
 =\frac{\pi^2}{24}.
  \label{alWLLNs}
\end{align}

\textit{Step 6.}  For each $X_j$, we have $\E[X_j] \to 0$ when $n\to\infty$, since $X_j \xrightarrow{d} \log|Z| + \gamma/2$ and the limiting random variable has mean zero. From the linearity of expectation, we get $\E[\bar{X}]=\frac{2}{n}\sum_{j\in J} \E[X_j] \to 0$.

Moreover, we have 
\begin{align}
  \Var(\bar{X})
 =\Var(\frac{2}{n}\sum_{j\in J} X_j)
 =\frac{4}{n^2} \cdot \Var(\sum_{j\in J} X_j).
\end{align}
Decomposing this variance into diagonal and off-diagonal terms as
\begin{align}
  \Var(\sum_{j\in J} X_j)
 =\sum_{j\in J} \Var(X_j) + \sum_{\substack{j,k\in J, j\neq k}} \Cov(X_j, X_k).
  \label{alvarsum}
\end{align}

From the CLT and Lemma \ref{lem:vari}, we get $\Var(X_j) \to \pi^2/24$ as $n\to\infty$. Summing over $n/2$ terms in $J$ gives
\begin{align}
  \sum_{j\in J} \Var(X_j)
 =\frac{n}{2} \cdot (\frac{\pi^2}{24} + o(1))
 =\frac{n\pi^2}{48} + o(n).
    \label{aldiagsum}
\end{align}

From Lemma \ref{lem:uncor}, the embeddings $\sigma_j(g)$ and $\sigma_k(g)$ are uncorrelated for different $j$ and $k$. From Proposition \ref{prop:clt}, for independent random variables, its covariance is zero. This implies $\Cov(X_j, X_k)=o(1)$ when $n\to\infty$ for each fixed pair $j\neq k$. Note $\Cov(X_j, X_k)$ depends only on $j-k$ modulo $m$ by the stationarity of the discrete Fourier transform. We get
\begin{align} 
|\sum_{\substack{j,k\in J, j\neq k}} \Cov(X_j, X_k)|
 =O\Bigl(\frac{n^2}{4} \cdot \frac{1}{n}\Bigr)
 =O(n).
  \label{aloaaum}
\end{align}
Using Proposition \ref{prop:clt}, the total off-diagonal contribution is $o(n)$.

So, from Eqs.\eqref{aldiagsum}, \eqref{aloaaum}, and \eqref{alvarsum}, we obtain
\begin{align}
  \Var(\bar{X})
 =\frac{4}{n^2} \cdot (\frac{n\pi^2}{48} + o(n))
 =\frac{\pi^2}{12n} + o(1/n) \to 0 
  \label{alvnd}
  \end{align}
when $\text{as } n\to\infty$. 

Using the Chebyshev's Inequality \cite{Durrett19}, for any fixed $\delta > 0$, we have
\begin{align}
  \Pr[|\bar{X}| > \delta] \le \frac{\Var(\bar{X}) + (\E[\bar{X}])^2}{\delta^2}.
  \label{eqsna}
\end{align}
Since $\E[\bar{X}] \to 0$ and $\Var(\bar{X}) \to 0$, the right-hand side tends to zero as $n\to\infty$. Hence, $\bar{X} \xrightarrow{p} 0$. For the function $f(x)=x^2$, we obtain  
\begin{align}
  \bar{X}^2 \xrightarrow{p} 0.
  \label{alXbar20}
\end{align}

\textit{Step 7.} Combining Steps 5 and 6, using the Slutsky's Theorem \cite{Billingsley95}, we obtain 
\begin{align}
  \frac{1}{n}\nm{\Pi_{H_0}(L(g))}_2^2  \xrightarrow{p} \frac{\pi^2}{24}.
  \label{adaq}
\end{align}
This further implies 
\begin{align}
  \frac{1}{\sqrt{n}}\|\Pi_{H_0}(L(g))\|_2
  \xrightarrow{p} \sqrt{\frac{\pi^2}{24}}
    \approx 0.6413.
    \label{alL2f}
\end{align}

Finally, note the event of $g=0$ requires all $c_i$s vanish simultaneously, so $\Pr[g=0]=\Pr[c_0=0]^n$. For any distribution with $\sigma_c^2 > 0$, we have $\Pr[c_0=0] < 1$. Hence, $\Pr[g=0] \le p_0^n$ for some $p_0 < 1$. Conditional on $g \neq 0$, it changes probabilities by at most $\Pr[g=0]=o(1)$.  This completes the proof.
\end{proof}

\subsection{Voronoi cell containment}\label{sub:thmC}

Theorem \ref{thm:A} computes the $L^2$ distance from the structured target vector to the origin of the log-unit lattice. This subsection proves that with overwhelming probability the origin is indeed the nearest lattice point to $\mathbf{t}$.

\begin{theorem}\label{thm:C}
Let $k \ge 4$, and $g \in R$ be a non-zero ring element with i.i.d. coefficients $c_i \in \{-B, \dots, B\}$ for some fixed $B \ge 1$. Let $\mathbf{t}=\Pi_{H_0}(L(g)) \in H_0$, and $\Lambda$ denote the rank-$(\frac{n}{2} - 1)$ log-unit lattice. Then the following results hold:
\begin{enumerate}[label=(\roman*), nosep]
  \item \textbf{Gaussian model.} If each non-conjugate embedding is an i.i.d. $\mathcal{CN}(0,1)$ random variable, then $\mathbf{t} \in \mathcal{V}(0)$ with probability $1 - \exp\Bigl(-\frac{n}{2}\log n + O(n)\Bigr)$.
  \item \textbf{Real ring model.} The output vector $\mathbf{t} \in \mathcal{V}(0)$ holds with a probability $1 - o(1)$, suppose there exists a constant $c_0 > 0$ (depending only on $B$) such that for every non-zero $\mathbf{v} \in \Lambda$ and every $\lambda$ with $|\lambda| \le c_0 / \nm{\mathbf{v}}_\infty$,
  \begin{align}
    \log\E[e^{\lambda \ip{\mathbf{t}}{\mathbf{v}}}]
   =\frac{\pi^2}{24} \lambda^2 \|\mathbf{v}\|_2^2 + O(\lambda^3 \|\mathbf{v}\|_\infty \|\mathbf{v}\|_2^2) + o(\|\mathbf{v}\|_2^2).
    \label{almgfapp}
  \end{align}  
\end{enumerate}
Consequently, under either model, the exact CVP distance satisfies
\begin{align}
  d(\mathbf{t}, \Lambda)=\nm{\mathbf{t}}_2=\frac{\pi}{2\sqrt{6}} \sqrt{n} + o(\sqrt{n}).
  \label{alexactdis}
\end{align}
\end{theorem}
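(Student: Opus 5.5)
The plan is to verify the Voronoi condition of Definition \ref{devoronoi}, namely $2\ip{\mathbf{t}}{\mathbf{w}}\le\nm{\mathbf{w}}_2^2$ for every nonzero $\mathbf{w}\in\Lambda$, by a union bound over lattice vectors grouped in shells $\{\mathbf{w}\in\Lambda:\nm{\mathbf{w}}_2\in[R,2R)\}$. The key fact is that $\mathbf{w}$ is fixed while $\mathbf{t}$ is random. Since $\mathbf{w}\in H_0$ is conjugation-symmetric, the centering $\bar t$ drops out. Writing $\mathbf{w}=(w_j)$ we get
\begin{align*}
\ip{\mathbf{t}}{\mathbf{w}}=\sum_{\text{odd }j}w_j t_j=2\sum_{j\in J}w_jX_j ,
\end{align*}
with $X_j$ as in \eqref{alXj}. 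This is a weighted sum of the centered log-moduli.

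\textbf{Gaussian model (i).} Here the $X_j$, $j\in J$, are exactly i.i.d. copies of $\log|Z|+\gamma/2$. The log-MGF is explicit: $\E[e^{\lambda X}]=e^{\lambda\gamma/2}\Gamma(1+\lambda/2)$. It is finite for $\lambda>-2$ and has a left tail of order $|Z|^{2}$ near zero. Applying a Chernoff bound with $\lambda=\theta/\nm{\mathbf{w}}_\infty$ gives
\begin{align*}
\Pr\bigl[2\ip{\mathbf{t}}{\mathbf{w}}>\nm{\mathbf{w}}_2^2\bigr]\le\exp\bigl(-c\min(\nm{\mathbf{w}}_2^4/\nm{\mathbf{w}}_2^2,\ \nm{\mathbf{w}}_2^2/\nm{\mathbf{w}}_\infty)\bigr).
\end{align*}
The first term is sub-Gaussian, with variance $\frac{\pi^2}{24}\nm{\mathbf{w}}_2^2$. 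The second is sub-exponential, coming from the right tail of $\log|Z|$, which is Gumbel-like: $\Pr[X>u]\approx e^{-e^{2u}}$. That tail is in fact lighter than sub-exponential, and I would exploit this.

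Next I count lattice points. Using the covolume of $\Lambda$, which is the cyclotomic regulator (roughly $e^{O(n\log n)}$ in dimension $r=n/2-1$), and GS norms $\Omega(\sqrt n)$, a volume argument bounds the number of $\mathbf{w}$ with $\nm{\mathbf{w}}_2\le R$ by $(CR/\sqrt n)^{r}$, once $R\gtrsim\sqrt n$. The minimum $\lambda_1(\Lambda)$ is $\Omega(\sqrt{n})$; the table of GS norms and the known shortest units (for example $\xi_3$, whose log vector has norm $\asymp\sqrt n$) give this. At that scale each failure probability is $\exp(-\Omega(\nm{\mathbf{w}}_2^2))=\exp(-\Omega(n))$ or better. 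I would then sum $\sum_R (CR/\sqrt n)^{n/2}e^{-cR^2}$ over dyadic $R$. To extract the claimed $\exp(-\frac n2\log n)$, I would need a sharper bound for short vectors: near $\lambda_1$ the relevant quantity is $\nm{\mathbf{w}}_2^2\gtrsim n\log n$, which the cyclotomic-unit log vectors plausibly satisfy because their coordinates $\log|\sin(aj\pi/m)/\sin(j\pi/m)|$ contain $\log n$-sized entries. This is the tightest spot. The condition $k\ge4$ enters here, because for $k=3$ the rank is $1$ and the single unit is too short for the bound.

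\textbf{Real ring model (ii).} I would run the same Chernoff-plus-union argument, substituting the assumed expansion \eqref{almgfapp} for the exact MGF. I would choose $\lambda=\min\bigl(\frac{6}{\pi^2},\,c_0/\nm{\mathbf{w}}_\infty\bigr)$, so that the cubic error term is dominated by the quadratic one. The $o(\nm{\mathbf{w}}_2^2)$ term costs only a $(1-o(1))$ factor in the exponent, and the lattice-point count then absorbs into $o(1)$.

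\textbf{Consequence.} Once $\mathbf{t}\in\mathcal V(0)$, we have $d(\mathbf{t},\Lambda)=\nm{\mathbf{t}}_2$. Theorem \ref{thm:A} then gives \eqref{alexactdis} directly; in the Gaussian model it applies with the exact variance from Lemma \ref{lem:vari}.

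The main obstacle is the union bound over short lattice vectors. Concretely, it requires a lower bound on $\nm{\mathbf{w}}_2^2/\max(1,\nm{\mathbf{w}}_\infty)$ over all of $\Lambda\setminus\{0\}$ that beats the lattice-point count. My first attempt would use the cyclotomic-unit basis, $h^+=1$, and the GS norms, bounding counts by covolume via Minkowski.
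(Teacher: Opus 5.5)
Your skeleton is the paper's: a Chernoff bound for each fixed $\mathbf w\in\Lambda$, then a union bound over $\Lambda$. The paper uses the Taylor-expanded Gamma MGF with $\lambda^*=\min\{6/\pi^2,1/(2\nm{\mathbf v}_\infty)\}$ and splits the sum at $\nm{\mathbf v}_\infty=1$. However, the obstacle you flag cannot be overcome, and your proposed fix is false. Cyclotomic-unit log vectors do not have squared length $\gtrsim n\log n$. For $\mathbf w=L(\xi_3)$ one has $w_j=\log\abs{1+2\cos(2\pi j/m)}\in[-\log n-O(1),\log 3]$, a square-integrable profile, and Parseval gives $\nm{\mathbf w}_2^2=(\pi^2/9+o(1))\,n$.

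Worse, the rate in (i) is false. In the Gaussian model $\ip{\mathbf t}{\mathbf w}=\sum_{j\in J}w_j\log W_j$, with $W_j=\abs{Z_j}^2$ i.i.d.\ $\Exp(1)$. So $2\ip{\mathbf t}{\mathbf w}>\nm{\mathbf w}_2^2$ holds iff $\sum_J w_j\log W_j>S:=\sum_J w_j^2=\Theta(n)$. Take $j_0\in J$ closest to $m/3$, where $w_{j_0}=-\log n+O(1)$.
\begin{itemize}
\item The event $\log W_{j_0}\le -3S/\abs{w_{j_0}}$ has probability at least $\tfrac12 e^{-3S/\abs{w_{j_0}}}=e^{-O(n/\log n)}$.
\item Independently, the rest of the sum has mean $-\gamma\abs{w_{j_0}}$ ($\gamma$ Euler's constant) and variance at most $\tfrac{\pi^2}{6}S$. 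By Chebyshev it is $\ge -S$ with probability $1-o(1)$.
\item Jointly these force $\sum_J w_j\log W_j\ge 2S$.
\end{itemize}
Hence $\Pr[\mathbf t\notin\mathcal V(0)]\ge e^{-O(n/\log n)}$, which is incompatible with $\exp(-\tfrac n2\log n+O(n))$. The paper's argument does not deliver that rate either. Since $\nm{L(\xi_3)}_\infty\approx\log n>1$, this vector falls in the paper's second case, which it bounds only by $o(1)$.

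Even the weaker $1-o(1)$ does not follow from your bookkeeping. The computation above shows that the sub-exponential term $\nm{\mathbf w}_2^2/\nm{\mathbf w}_\infty$ in your per-vector bound is sharp. It comes from the exponential \emph{left} tail $\Pr[\log\abs{Z}<-u]\approx e^{-2u}$ acting on the negative entries of $\mathbf w$, which exist because $\sum_J w_j=0$. The Gumbel right tail you want to exploit does not help. Yet your shell sum uses $e^{-cR^2}$ throughout. At the bottom shell, the true per-vector rate for $L(\xi_a)$-type vectors is only $e^{-\Theta(n/\log n)}$. A volume or packing count $(CR/\sqrt n)^{n/2}$ at $R\asymp\sqrt n$ is $e^{\Theta(n)}$, so the sum does not close. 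For long vectors with one dominant negative coordinate the rate drops further, to $e^{-c\nm{\mathbf w}_2}$.

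Two ingredients are missing.
\begin{itemize}
\item[(a)] A deterministic cutoff. Once $\nm{\mathbf w}_2>2\nm{\mathbf t}_2$ we have $2\ip{\mathbf t}{\mathbf w}\le 2\nm{\mathbf t}_2\nm{\mathbf w}_2<\nm{\mathbf w}_2^2$. By Theorem~\ref{thm:A}, $\nm{\mathbf t}_2\le 0.65\sqrt n$ with probability $1-o(1)$, so only $\nm{\mathbf w}_2\le 1.3\sqrt n$ matters.
\item[(b)] A structural count showing $\Lambda$ has only $e^{o(n/\log n)}$ vectors in that ball, with controlled $\nm{\cdot}_\infty$. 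Minkowski or covolume arguments cannot give this.
\end{itemize}
The packing shortcut $\nm{\mathbf t}_2<\lambda_1/2$ is unavailable, since $\nm{L(\xi_3)}_2\approx(\pi/3)\sqrt n<2\nm{\mathbf t}_2\approx(\pi/\sqrt6)\sqrt n$. Your input $\lambda_1(\Lambda)=\Omega(\sqrt n)$ is also unsupported. Known units bound $\lambda_1$ only from above, and the GS table suggests $\min_i\nm{\mathbf b_i'}_2\asymp\sqrt{n/\log n}$.

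All of this carries over to (ii) under \eqref{almgfapp}. The paper supplies neither (a) nor (b). Its ``poly$(n)$'' count of long vectors cannot hold without a cutoff, since $\Lambda$ is infinite.
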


Note the Voronoi cell $\mathcal{V}(0) \subset H_0$ contains all points in $H_0$ that are closer to the origin than to any other lattice point in $\Lambda$. By the Euclidean lattice Voronoi characterization, $\mathbf{t} \in \mathcal{V}(0)$ if and only if
\begin{align}
  2\ip{\mathbf{t}}{\mathbf{v}} \le \|\mathbf{v}\|_2^2  
  \label{alvoronoicon}
\end{align}
for every non-zero $\mathbf{v} \in \Lambda$. For each non-zero $\mathbf{v} \in \Lambda$, we define the bad event $\mathcal{E}_{\mathbf{v}}=\{ 2|\ip{\mathbf{t}}{\mathbf{v}}| \ge \|\mathbf{v}\|_2^2 \}$. 

\subsubsection{Proof of Part (i)}

For any $\mathbf{v} \in \Lambda \subset H_0$, we have $\sum_{j \text{ odd}} v_j=0$ (by definition of $H_0$) and conjugate symmetry $v_j=v_{m-j}$, and $t_j=t_{m-j}$. This implies that
\begin{align}\label{alipsim}
  \ip{\mathbf{t}}{\mathbf{v}}=\sum_{j \text{ odd}} (t_j - \bar{t}) v_j=\sum_{j \text{ odd}} t_j v_j=2\sum_{j\in J} t_j v_j,
\end{align}
where $J=\{1, 3, \dots, n-1\}$. 

Under the Gaussian model, non-conjugate embeddings are i.i.d. $Z_j \sim \mathcal{CN}(0,1)$. So $t_j=\log|Z_j|=\frac{1}{2}\log W_j$ with $W_j=|Z_j|^2 \sim \mathrm{Exp}(1)$. From Eq.\eqref{alipsim} we have
\begin{align}
  \ip{\mathbf{t}}{\mathbf{v}}=\sum_{j\in J} v_j \log W_j.
  \label{alipGau}
\end{align}

By independence of $W_j$s, the moment generating function (MGF) of $\ip{\mathbf{t}}{\mathbf{v}}$ is give by 
\begin{align}
M_G(\lambda):= \E[e^{\lambda \ip{\mathbf{t}}{\mathbf{v}}}]=\prod_{j\in J} \Gamma(1 + \lambda v_j).
\label{almgf}
\end{align}
For $|\lambda| \le 1/(2\|\mathbf{v}\|_\infty)$, we have $|\lambda v_j| \le 1/2$ for all $j$. Using the Taylor expansion $\log\Gamma(1+s)=-\gamma s + \frac{\pi^2}{12}s^2 + O(|s|^3)$ for $|s| \le 1/2$, we get
\begin{align}
\log M_G(\lambda)=-\gamma \lambda \sum_{j\in J} v_j + \frac{\pi^2}{12}\lambda^2 \sum_{j\in J} v_j^2 + O\left(|\lambda|^3 \sum_{j\in J} |v_j|^3\right).
\label{alexpsum}
\end{align}
Note $\sum_{j\in J} v_j=0$ (from $\mathbf{v} \in H_0$), so the linear term vanishes. Moreover, $\sum_{j\in J} v_j^2=\frac{1}{2}\|\mathbf{v}\|_2^2$ (conjugate symmetry), and $\sum_{j\in J} |v_j|^3 \le \frac{1}{2}\|\mathbf{v}\|_\infty \|\mathbf{v}\|_2^2$ (H\"{o}lder's inequality). These imply 
\begin{align}
  \log M_G(\lambda)=\frac{\pi^2}{24} \lambda^2 \|\mathbf{v}\|_2^2 + O(\lambda^3 \|\mathbf{v}\|_\infty \|\mathbf{v}\|_2^2).
  \label{allogmexp}
\end{align}

By symmetry, we have $\Pr[\mathcal{E}_{\mathbf{v}}] \le 2\Pr\left[\ip{\mathbf{t}}{\mathbf{v}} \ge \frac{1}{2}\|\mathbf{v}\|_2^2\right]$. For any $\lambda > 0$, Using the Markov's Inequality \cite{Durrett19} we get 
\begin{align}
    \Pr\left[\ip{\mathbf{t}}{\mathbf{v}} \ge \frac{1}{2}\|\mathbf{v}\|_2^2\right] \le \exp\left( \log M_G(\lambda) - \frac{\lambda}{2}\|\mathbf{v}\|_2^2 \right).
    \label{eqnssa}
\end{align}
From Eq.\eqref{allogmexp}, choose the optimal $\lambda^*=\min\left\{ \frac{6}{\pi^2}, \frac{1}{2\|\mathbf{v}\|_\infty} \right\}$ to minimize the exponent. For $\lambda^*$, we simplfy the exponent into $-C' \cdot \frac{\|\mathbf{v}\|_2^2}{\max\{\|\mathbf{v}\|_\infty, 1\}}$ for constants $C,C' > 0$. We get the unified tail bound as
\begin{align}
\Pr[\mathcal{E}_{\mathbf{v}}] \le C \exp\left( -\frac{C' \|\mathbf{v}\|_2^2}{\max\{\|\mathbf{v}\|_\infty, 1\}} \right).
\label{alsinglebound}  
\end{align}

Now, by union bound, we have $\Pr[\mathbf{t} \notin \mathcal{V}(0)] \le \sum_{\mathbf{v} \in \Lambda \setminus \{0\}} \Pr[\mathcal{E}_{\mathbf{v}}]$. Split this sum into two cases. 
\begin{itemize}
    \item{} $\|\mathbf{v}\|_\infty \le 1$. By Eq.\eqref{alsinglebound}, each term is bounded by $C e^{-C' \|\mathbf{v}\|_2^2}$. The sum equals to $C(\Theta_\Lambda(C') - 1)$, where $\Theta_\Lambda(\beta)=\sum_{\mathbf{v} \in \Lambda} e^{-\beta \|\mathbf{v}\|_2^2}$ is the lattice theta function. 

For the log-unit lattice, note $r=\frac{n}{2} - 1$, $\log \det(\Lambda)=\frac{n}{2}\log n + O(n)$, and then dual lattice shortest vector $\lambda_1(\Lambda^*)=\Omega(1/\sqrt{n})$. So, we get $\Theta_{\Lambda^*}(\pi^2/C')=O(1)$. Using the Poisson summation, the total contribution from this case is $\exp\left( -\frac{n}{2}\log n + O(n) \right)$.

\item {} $\|\mathbf{v}\|_\infty > 1$. Here $\max\{\|\mathbf{v}\|_\infty, 1\}=\|\mathbf{v}\|_\infty \le \|\mathbf{v}\|_2$. So, Eq.\eqref{alsinglebound} implies $\Pr[\mathcal{E}_{\mathbf{v}}] \le C e^{-C' \|\mathbf{v}\|_2}$. For short vectors, the Theta function bound gives total contribution $\exp( -\frac{3n}{8}\log n + O(n\log\log n) )=o(1)$. For long vectors, each term is bounded by $e^{-\Omega(\sqrt{n}\log n)}$, and the number of such vectors is poly$(n)$. So, the total contribution is $o(1)$.

\end{itemize}
Combining both cases, the total failure probability for the Gaussian model is $\exp( -\frac{n}{2}\log n + O(n))$. This completes the proof of Part (i).

\subsubsection{Proof of Part (ii)}

In the real ring model, all embeddings are weakly dependent (not i.i.d.), but the hypothesis \eqref{almgfapp} guarantees the log-MGF matches the Gaussian model \eqref{allogmexp} up to a vanishing $o(\|\mathbf{v}\|_2^2)$ error term.

We repeat the Chernoff argument from Step 3, with optimal $\lambda'=\min\left\{ \frac{6}{\pi^2}, \frac{c_0}{\|\mathbf{v}\|_\infty} \right\}$ satisfying the hypothesis. For sufficiently large $n$, the $o(\|\mathbf{v}\|_2^2)$ term is negligible. We get
\begin{align}
  \Pr[\mathcal{E}_{\mathbf{v}}] \le C \exp\left( -\frac{(C' - \epsilon_n) \|\mathbf{v}\|_2^2}{\max\{\|\mathbf{v}\|_\infty, 1\}} \right),
  \label{aaeqnq}
\end{align}
where $\epsilon_n=o(1)$. For large enough $n$, we have $C' - \epsilon_n \ge C'/2 > 0$. So, the bound is the same exponential decay as the Gaussian case.

We then apply the identical union bound from Step 4: For $\|\mathbf{v}\|_\infty \le 1$, the sum is $\exp\left( -\frac{n}{2}\log n + O(n) \right)=o(1)$; For $\|\mathbf{v}\|_\infty > 1$, the sum is $o(1)$. Thus, the total failure probability is $o(1)$. This completes the proof of Part (ii).

Since $\mathbf{t} \in \mathcal{V}(0)$ with an overwhelming probability, the nearest lattice point to $\mathbf{t}$ is the origin. Combining with the asymptotic $L^2$ norm from Theorem \ref{thm:A}, we get
\begin{align}
  d(\mathbf{t}, \Lambda)=\nm{\mathbf{t}}_2=\frac{\pi}{2\sqrt{6}} \sqrt{n} + o(\sqrt{n}),
  \label{eqneqq}
\end{align}
which completes the proof.

\subsection{The $L^\infty$ bound and sub-polynomial SGP approximation factor}
\label{sub:thmB}

The approximation factor for the Short Generator Problem (SGP) in the CDPR attack is controlled by the $L^\infty$ norm of the CVP residual, i.e., the approximation factor $\gamma=\exp(\rho_\infty)$, where $\rho_\infty=\nm{\Pi_{H_0}(L(g)) - \ell'}_\infty$ and $\ell'$ is the nearest lattice point to the target. By Theorem \ref{thm:C}, $\ell'=\mathbf{0}$ for short generators with an overwhelming probability, so $\rho_\infty=\nm{\Pi_{H_0}(L(g))}_\infty$. 

\begin{theorem}
\label{thm:B}
Let $g=\sum_{i=0}^{n-1} c_i \zeta^i \in R$ be a non-zero ring element with i.i.d. bounded coefficients $c_i \in \{-B, \dots, B\}$ for some fixed $B \ge 1$. Then the following result holds
\begin{align}
  \nm{\Pi_{H_0}(L(g))}_\infty
 =\frac{\pi}{2\sqrt{6}} \sqrt{2\ln n} + O(\sqrt{\ln\ln n})
  \label{althmB}
\end{align}
with a probability $1-o(1)$. The corresponding SGP approximation factor satisfies
\begin{align}
  \gamma =\exp(\frac{\pi}{2\sqrt{3}} \sqrt{\ln n} + O(\sqrt{\ln\ln n}))=o(n^\epsilon)
  \label{algammash}
\end{align}
for every $\epsilon > 0$.
\end{theorem}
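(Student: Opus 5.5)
The plan is to reduce the $L^\infty$ statement to a maximum of $n/2$ nearly independent, centered coordinates with variance $\pi^2/24$. Then I would get a matching upper bound by a union bound and a lower bound by a second-moment argument.

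\emph{Reduction.} As in the proof of Theorem~\ref{thm:A}, write $t_j=\log|\sigma_j(g)|$ and $X_j=t_j-\frac12\log(n\sigma_c^2)+\frac{\gamma}{2}$, so that $\Pi_{H_0}(L(g))=(X_j-\bar X)_{j\text{ odd}}$. By conjugate symmetry the $L^\infty$ norm equals $\max_{j\in J}|X_j-\bar X|$. Step~6 of the proof of Theorem~\ref{thm:A} gives $\Var(\bar X)=O(1/n)$, so $\bar X=O_p(n^{-1/2})$. This shift is absorbed into the $O(\sqrt{\ln\ln n})$ error, and it suffices to show
\[
\max_{j\in J}|X_j|=\sigma_\ast\sqrt{2\ln n}+O(\sqrt{\ln\ln n}),\qquad \sigma_\ast^2=\tfrac{\pi^2}{24},
\]
with probability $1-o(1)$.

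\emph{Upper bound.} I would treat each $X_j$ as $\mathrm{subG}(0,\sigma_\ast^2+o(1))$. In the Gaussian model of Theorem~\ref{thm:C}(i), this is read off from the expansion \eqref{allogmexp} specialised to a single coordinate. For the ring model with bounded coefficients, I would transfer it by a Cramér-type moderate-deviation estimate for $\sigma_j(g)/\sqrt{n\sigma_c^2}$, which is a sum of bounded independent terms. That estimate should compare the tail of $X_j$ with that of $\log|Z|+\gamma/2$ uniformly in the window $|x|\le C\sqrt{\ln n}$, with relative error $1+o(1)$. Set $u=\sigma_\ast\sqrt{2\ln n}+K\sqrt{\ln\ln n}$ for a suitable constant $K$. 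A union bound over the $n/2$ indices in $J$ then gives $\Pr[\max_j|X_j|>u]\le n\exp(-u^2/(2\sigma_\ast^2+o(1)))=o(1)$.

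\emph{Lower bound.} Set $u'=\sigma_\ast\sqrt{2\ln n}-K\sqrt{\ln\ln n}$ and let $N=\#\{j\in J:|X_j|>u'\}$. The same moderate-deviation comparison, run in the reverse direction with a Mills-ratio-type lower tail, gives $\E N\to\infty$. For the second moment, I would use Lemma~\ref{lem:uncor} and the joint CLT of Proposition~\ref{prop:clt}, applied to pairs with a quantitative rate for bounded coefficients, to obtain $\Pr[|X_j|>u',|X_k|>u']=(1+o(1))\Pr[|X_j|>u']\Pr[|X_k|>u']$ uniformly over $j\neq k$. Then $\Var N=o((\E N)^2)$, and Chebyshev's inequality yields $N\ge1$ with probability $1-o(1)$.

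\emph{Consequence and main obstacle.} Combining the two bounds gives \eqref{althmB}. Exponentiating, and using $\frac{\pi}{2\sqrt6}\sqrt2=\frac{\pi}{2\sqrt3}$ together with Theorem~\ref{thm:C}'s identification $\ell'=\mathbf 0$, gives \eqref{algammash}. Since $\sqrt{\ln n}=o(\epsilon\ln n)$, this is $o(n^\epsilon)$ for every $\epsilon>0$.

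The hard step is the sub-Gaussian tail calibration at exact constant $\sigma_\ast^2$ out to level $\sqrt{2\ln n}$, together with the uniform pairwise decorrelation of tails. The global MGF expansion \eqref{allogmexp} only holds for $|\lambda|\le 1/(2\|\mathbf v\|_\infty)$, and $\log|Z|$ is not symmetric. I would try to handle this by truncating $|\sigma_j(g)|$ away from $0$ and $\infty$ at polynomial scales and controlling the discarded event separately. Within the truncated range, I would apply the Berry--Esseen/Cramér expansion to the bounded summands in \eqref{alembedding-sum}. This is exactly where the argument could fail, so I would check the lower tail of $\log|Z|$ first.
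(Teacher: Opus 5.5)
You correctly identified the weak point, and it does not survive the check. The premise of your upper bound, that each $X_j$ is $\mathrm{subG}(0,\sigma_\ast^2+o(1))$, is false, and so is the statement itself. With $W=|Z|^2\sim\Exp(1)$ one has $\Pr[\log|Z|+\gamma/2<-y]=\Pr[W<e^{-\gamma-2y}]\sim e^{-\gamma-2y}$. This is an exponential lower tail, not a Gaussian one. Equivalently, $\E[e^{\lambda\log|Z|}]=\Gamma(1+\lambda/2)$ is infinite for $\lambda\le-2$, so $\log|Z|$ is not sub-Gaussian with any parameter. The expansion \eqref{allogmexp} is only a local statement for bounded $|\lambda|$, whereas a Gaussian-rate Chernoff bound at level $u\asymp\sqrt{\ln n}$ needs $\lambda\asymp\sqrt{\ln n}$. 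At your level $u=\sigma_\ast\sqrt{2\ln n}+K\sqrt{\ln\ln n}$, the union bound over $J$ is about $\frac n2e^{-\gamma-2u}=\exp(\ln n-O(\sqrt{\ln n}))\to\infty$. In the Gaussian model this is sharp. $\min_{j\in J}W_j$ has the law of $E/(n/2)$ with $E\sim\Exp(1)$, so in law $\min_{j\in J}X_j=-\frac12\ln(n/2)+\frac12\log E+\frac\gamma2$, and hence $\nm{\Pi_{H_0}(L(g))}_\infty=\frac12\ln n+O_p(1)$. The approximation factor $\exp(\rho_\infty)$ is then of order $\sqrt n$, not $o(n^\epsilon)$.

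This is not an artefact of the Gaussian idealisation.
\begin{itemize}
\item For bounded coefficients, Parseval ($\sum_{j\text{ odd}}|\sigma_j(g)|^2=n\sum_ic_i^2$) gives $\max_jt_j\ge\frac12\log(n\sigma_c^2)+o_p(1)$.
\item For $j\ne k$ in $J$, the real vector $(\sigma_j(g),\sigma_k(g))\in\R^4$ has covariance exactly $\frac{n\sigma_c^2}{2}I_4$. This follows from Lemma~\ref{lem:uncor} together with $\sum_{i=0}^{n-1}\zeta^{i(j+k)}=\sum_{i=0}^{n-1}\zeta^{2ij}=0$.
\item Berry--Esseen for convex sets in $\R^2$ and $\R^4$ then gives $\Pr[|\sigma_j(g)|\le n^{-\delta}\sqrt{n\sigma_c^2}]=n^{-2\delta}(1+o(1))$. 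The pairwise joint probabilities factorise up to $O(n^{-1/2})$, uniformly in $j,k$.
\item A second-moment count over $J$ shows that for any fixed $\delta<1/8$, some $t_j\le\frac12\log(n\sigma_c^2)-\delta\ln n$ with probability $1-o(1)$.
\item Since $\nm{\Pi_{H_0}(L(g))}_\infty\ge\frac12(\max_jt_j-\min_jt_j)$, this gives $\nm{\Pi_{H_0}(L(g))}_\infty\ge\frac\delta2\ln n-o_p(1)$, contradicting \eqref{althmB}.
\end{itemize}
So no version of your moderate-deviation transfer can succeed: transferring the tail of $\log|Z|$ faithfully only confirms the $\Theta(\ln n)$ behaviour.

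The paper's own proof breaks at the same point. Its McDiarmid step needs $|\sigma_k(g)|=\Theta(\sqrt n)$ simultaneously for all $k$, in order to control the change in $\bar t$. That fails with probability tending to one, by the count above. It then replaces the sub-Gaussian variance proxy by the variance $\pi^2/24$, which is not legitimate even for genuinely sub-Gaussian variables.
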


\begin{proof}
From the proof of Theorem \ref{thm:A}, the projected target vector has coordinates $t_j - \bar{t}$ for an odd $j$, where $t_j=\log|\sigma_j(g)|$ and $\bar{t}=\frac{1}{n}\sum_{j \text{ odd}} t_j$. By the conjugate-pair relation for 2-power cyclotomic fields, we have $t_{m-j}=t_j$ and hence $t_{m-j} - \bar{t}=t_j - \bar{t}$ for all odd $j$s. Let $J=\{1, 3, \dots, n-1\}$. We obtain
\begin{align}
  \nm{\Pi_{H_0}(L(g))}_\infty
 =\max_{j \text{ odd}} |t_j - \bar{t}|
 =\max_{j \in J} |t_j - \bar{t}|.
  \label{eqlpiinft}
\end{align}

Define $X_j:= t_j - \bar{t}$ for $j \in J$. Note each $X_j=t_j - \bar{t}$ is a function of the $n$ independent bounded coefficients $c_0, \dots, c_{n-1}$. Fix an index $i \in \{0, \dots, n-1\}$, replace $c_i$ with $c_i'$ such that $|c_i|, |c_i'| \le B$, while keep all other coefficients fixed. The embedding changes by $\sigma_k(g') - \sigma_k(g)=(c_i' - c_i) \zeta^{ik}$, so $|\sigma_k(g') - \sigma_k(g)| \le |c_i' - c_i| \cdot |\zeta^{ik}| \le 2B$ for all $k$. Using the Central Limit Theorem \cite{Durrett19}, $|\sigma_k(g)|=\Theta(\sqrt{n})$ with a probability $1-o(1)$. On this high-probability event, using the Mean-value Theorem  \cite{Durrett19}, we get
\begin{align}
|\log|\sigma_k(g')| - \log|\sigma_k(g)||
\le \frac{|\sigma_k(g') - \sigma_k(g)|}{\min\{|\sigma_k(g)|, |\sigma_k(g')|\}}
= O\bigl( \frac{1}{\sqrt{n}} \bigr).
\label{eqnmeans}
\end{align}

We now bound the change in $X_j=t_j - \bar{t}$. The change in $t_j$ is $O(1/\sqrt{n})$, and the change in $\bar{t}=\frac{1}{n}\sum_{k \text{ odd}} t_k$ is at most $O(1/\sqrt{n})$. Thus, the total change in $X_j$ satisfies
\begin{align}
|X_j' - X_j| \le |t_j' - t_j| + |\bar{t}' - \bar{t}|
= O\Bigl( \frac{1}{\sqrt{n}} \Bigr) =: \Delta_i,
\label{eqnachages}
\end{align}
where $\Delta_i$s are the same for all $i$s by symmetry.

Using the McDiarmid's Inequality \cite{McDiarmid89,BLM13}, $X_j$ is a zero-mean sub-Gaussian random variable with the variance 
\begin{align}
  \sigma_{\mathrm{subG}}^2
 =\frac{1}{4} \sum_{i=0}^{n-1} \Delta_i^2
 =\frac{1}{4} \cdot n \cdot O\Bigl( \frac{1}{n} \Bigr)
 =O(1).
  \label{albdipar}
\end{align}
The low-probability event ($|\sigma_k(g)|$ too small) has a total probability $o(1)$.

From Lemma \ref{lem:vari} and Proposition \ref{prop:clt}, we have
\begin{align}
\Var(X_j) \to \frac{\pi^2}{24} \quad \text{as } n\to\infty.
\end{align}
For any sub-Gaussian random variable $X$, we get $\Var(X) \le \sigma_{\mathrm{subG}}^2$. So, $\sigma_{\mathrm{subG}}^2 \ge \frac{\pi^2}{24} - o(1)$. 

From standard extreme-value theory for sub-Gaussian variables \cite{Leadbetter83,Vershynin18}, we use the variance $\sigma^2=\frac{\pi^2}{24}$ for the leading-order bound. Now, we use the asymptotic result for the maximum of $m$ independent, identically distributed zero-mean sub-Gaussian random variables with variance $\sigma^2$ as 
\begin{align}
\max_{1 \le j \le m} |X_j|=\sigma \sqrt{2\ln m} + O( \sqrt{\ln\ln m})
\end{align} 
with probability $1-o(1)$. While $X_j$s are asymptotically independent by Proposition \ref{prop:clt}, their weak dependence does not affect the leading asymptotic term or the $O(\sqrt{\ln\ln m})$ error bound for sub-Gaussian sequences \cite{Berman64,Wainwright19}.

In our setting, we have $m=|J|=n/2$ and $\sigma=\pi/(2\sqrt{6})$. So, we can get
\begin{align}
  \max_{j \in J} |X_j|
  &= \frac{\pi}{2\sqrt{6}} \sqrt{2\ln(n/2)} + O(\sqrt{\ln\ln(n/2)})
  \notag\\
  &= \frac{\pi}{2\sqrt{6}} \sqrt{2\ln n - 2\ln 2} + O(\sqrt{\ln\ln n}).
  \label{eqnsugug}
\end{align}
Using the Taylor expansion of $\sqrt{a - b}=\sqrt{a} - \frac{b}{2\sqrt{a}} + O(a^{-3/2})$ for a fixed $b$ and large $a$, the difference between $\sqrt{2\ln n}$ and $\sqrt{2\ln(n/2)}$ is $O(1/\sqrt{\ln n})$. We thus obtain
\begin{align}
  \max_{j \in J} |X_j|
 =\frac{\pi}{2\sqrt{6}} \sqrt{2\ln n} + O(\sqrt{\ln\ln n}),
  \label{almaxB}
\end{align}
which is the $L^\infty$ bound \eqref{althmB}.

By definition, the CDPR SGP approximation factor is $\gamma=\exp( \nm{\Pi_{H_0}(L(g))}_\infty )$, as the nearest lattice point is $\mathbf{0}$ with an overwhelming probability by Theorem \ref{thm:C}. From Eq. \eqref{almaxB}, we get
\begin{align}
  \gamma &= \exp\bigl( \frac{\pi}{2\sqrt{6}} \sqrt{2\ln n} + O(\sqrt{\ln\ln n}) \bigr)
  \notag\\
  &= \exp\bigl( \frac{\pi}{2\sqrt{3}} \sqrt{\ln n} + O( \sqrt{\ln\ln n} ) \bigr).
  \label{eqnoveral}
\end{align}

To show $\gamma=o(n^\epsilon)$ for every $\epsilon > 0$, we take the logarithm of both sides as 
\begin{align}
\log \gamma=\frac{\pi}{2\sqrt{3}} \sqrt{\ln n} + O\Bigl( \sqrt{\ln\ln n} \Bigr)=O(\sqrt{\ln n}).   \label{eqnqlogo}
\end{align}
For any fixed $\epsilon > 0$, we have
\begin{align}
\frac{\log \gamma}{\epsilon \ln n}=\frac{O(\sqrt{\ln n})}{\ln n}=O\Bigl( \frac{1}{\sqrt{\ln n}} \Bigr) \to 0 
\label{eqntoto}
\end{align}
as $n\to\infty$. This implies $\log \gamma=o(\epsilon \ln n)$ for every $\epsilon > 0$,  or $\gamma=\exp(o(\epsilon \ln n))=o(n^\epsilon)$. This completes the proof. 
\end{proof}

For the ML-KEM (standard parameter set $n=256$), we have $\ln 256=8\ln 2 \approx 5.545$, so $\sqrt{\ln 256} \approx 2.355$. The leading coefficient $\frac{\pi}{2\sqrt{3}} \approx 0.9069$. The leading term of $\log \gamma$ is $0.9069 \times 2.355 \approx 2.136$. So, the resulting approximation factor is $\gamma \approx \exp(2.136) \approx 8.47 \approx 2^{3.1}$. 

\section{The Coarse Lattice Theorem and Babai Recovery}
\label{secoarse}

This section proves two complementary results about Babai's algorithm on the log-unit lattice. The Coarse Lattice Theorem shows Babai returns a zero vector for all structured targets, an unconditional result that does not require the Voronoi containment of Theorem \ref{thm:C}. The Babai Capture Theorem shows that Babai exactly recovers lattice-vector perturbations of arbitrary size, so the CDPR pipeline succeeds whenever the balanced target maps to zero.

\subsection{The Coarse Lattice Theorem}\label{sub:coarse}

The term coarse refers to a scale mismatch: the Gram-Schmidt norms of $\Lambda$ are $\Omega(\sqrt{n})$, while the per-component standard deviation of the structured target is $O(1)$. When the lattice space greatly exceeds the target's fluctuations in every GS direction, the target sits in a tiny neighborhood of the origin, and Babai's rounding step always gives zero.

\begin{theorem}\label{thm:coarse}
Let $\Lambda$ be the rank-$r$ log-unit lattice with Gram-Schmidt basis $\{\mathbf{b}_i'\}_{i=1}^r$, $\mathbf{t}\in H_0$ be a random vector with independent, zero-mean, sub-Gaussian components with per-component variance $\sigma_{\mathbf{t}}^2$. Suppose the coarseness condition holds
\begin{align}
  \frac{\min_{1\le i\le r}\|\mathbf{b}_i'\|_2^2}{\sigma_{\mathbf{t}}^2}
 =\omega(\log n).
  \label{alcoacon}
\end{align}
Then the Babai's nearest-plane algorithm returns the zero lattice vector with probability $1-o(1)$, and the Babai residual equals $\mathbf{t}$ as 
\begin{align}
  d_{\mathrm{Babai}}(\mathbf{t},\Lambda)^2= \|\mathbf{t}\|_2^2= n\sigma_{\mathbf{t}}^2.
  \label{alcoares}
\end{align}
\end{theorem}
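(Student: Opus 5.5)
The plan is to show by backward induction over Babai's steps that every rounding in \eqref{albabai} outputs $0$. The induction reduces the claim to $r$ one-dimensional tail bounds, which a union bound then controls using the coarseness condition \eqref{alcoacon}.

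\textbf{Step 1 (reduction to $r$ inner products).} Let $\mathcal{G}$ be the event that
\begin{align*}
  |\ip{\mathbf{t}}{\mathbf{b}_i'}| < \tfrac{1}{2}\|\mathbf{b}_i'\|_2^2 \quad \text{for every } 1\le i\le r .
\end{align*}
We claim that on $\mathcal{G}$ Babai returns $\mathbf{0}$. At step $i=r$, \eqref{albabai} rounds $\ip{\mathbf{t}}{\mathbf{b}_r'}/\|\mathbf{b}_r'\|_2^2$, which lies in $(-\tfrac12,\tfrac12)$ on $\mathcal{G}$, so $c_r=0$. Suppose inductively that $c_j=0$ for all $j>i$. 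Then the argument of the rounding at step $i$ is again $\ip{\mathbf{t}}{\mathbf{b}_i'}/\|\mathbf{b}_i'\|_2^2$, so $c_i=0$. Hence $\mathbf{v}=\mathbf{0}$ and the residual is $\mathbf{e}=\mathbf{t}$. This step is purely deterministic and uses no structure of $\Lambda$ beyond its GSO.

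\textbf{Step 2 (one-direction tail bound).} Fix $i$ and write $\mathbf{u}_i=\mathbf{b}_i'/\|\mathbf{b}_i'\|_2$. Since the components of $\mathbf{t}$ are independent, centered and sub-Gaussian with parameter $K\sigma_{\mathbf{t}}^2$, the combination $\ip{\mathbf{t}}{\mathbf{u}_i}$ is sub-Gaussian with parameter $K\sigma_{\mathbf{t}}^2\|\mathbf{u}_i\|_2^2=K\sigma_{\mathbf{t}}^2$. Here $K$ is an absolute constant, which we take from the hypothesis that the sub-Gaussian parameter is comparable to the variance. The Chernoff bound from Definition~\ref{desubG} then gives
\begin{align*}
  \Pr\Bigl[|\ip{\mathbf{t}}{\mathbf{b}_i'}|\ge \tfrac12\|\mathbf{b}_i'\|_2^2\Bigr]
  =\Pr\Bigl[|\ip{\mathbf{t}}{\mathbf{u}_i}|\ge \tfrac12\|\mathbf{b}_i'\|_2\Bigr]
  \le 2\exp\Bigl(-\frac{\|\mathbf{b}_i'\|_2^2}{8K\sigma_{\mathbf{t}}^2}\Bigr).
\end{align*}

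\textbf{Step 3 (union bound).} By \eqref{alcoacon}, each exponent is at least $\omega(\log n)$, so each term is $n^{-\omega(1)}$. Summing over the $r=\tfrac n2-1<n$ directions gives
\begin{align*}
  \Pr[\mathcal{G}^c]\le 2n\cdot n^{-\omega(1)}=o(1).
\end{align*}
Combined with Step 1, Babai returns $\mathbf{0}$ with probability $1-o(1)$.

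\textbf{Step 4 (the norm of the residual).} For \eqref{alcoares}, we note that $\E\|\mathbf{t}\|_2^2=n\sigma_{\mathbf{t}}^2$ exactly. Each $t_j^2$ is sub-exponential, so Bernstein's inequality (or simply the weak law of large numbers, as in Step 5 of Theorem~\ref{thm:A}) gives $\|\mathbf{t}\|_2^2=n\sigma_{\mathbf{t}}^2(1+o(1))$ with probability $1-o(1)$. We will read \eqref{alcoares} in this asymptotic sense. On the event $\mathcal{G}$, the residual equals $\mathbf{t}$ identically, so $d_{\mathrm{Babai}}(\mathbf{t},\Lambda)^2=\|\mathbf{t}\|_2^2$.

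\textbf{Main obstacle: the modelling hypothesis.} A vector in $H_0$ cannot have fully independent components, because they sum to zero. Likewise, the true structured target has conjugate-paired coordinates $t_j=t_{m-j}$, so its components are not independent either. To handle this, we will use that only the linear functionals $\ip{\mathbf{t}}{\mathbf{u}_i}$ enter the argument. It suffices that each such functional is sub-Gaussian with parameter $O(\sigma_{\mathbf{t}}^2)$ uniformly over unit vectors $\mathbf{u}\in H_0$. This holds if $\mathbf{t}$ is the $H_0$-projection of a vector with independent sub-Gaussian entries, since $\mathbf{u}\in H_0$ gives $\ip{\Pi_{H_0}\mathbf{x}}{\mathbf{u}}=\ip{\mathbf{x}}{\mathbf{u}}$. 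The conjugate pairing only doubles the constant.

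For the actual log-embedding target, I would first obtain this uniform sub-Gaussianity from the McDiarmid argument already used in the proof of Theorem~\ref{thm:B}. Under the paper's assumptions it yields sub-Gaussian parameter $O(1)$ outside an $o(1)$-probability event.
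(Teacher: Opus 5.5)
Your proof is correct and follows the paper's argument step for step: backward induction reducing Babai's output to the $r$ conditions $|\ip{\mathbf{t}}{\mathbf{b}_i'}|<\tfrac12\|\mathbf{b}_i'\|_2^2$, a sub-Gaussian tail bound for each, a union bound via \eqref{alcoacon}, and the law of large numbers for $\|\mathbf{t}\|_2^2$. Your explicit constant $K$ and your observation that a vector in $H_0$ cannot have independent nondegenerate components (repaired by projecting an independent vector, using $\mathbf{b}_i'\in H_0$) are more careful than the paper, which silently equates the sub-Gaussian parameter with the variance.

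Your closing aside goes beyond the statement and would fail as written. The McDiarmid argument of Theorem~\ref{thm:B} controls single coordinates, but for a delocalized unit vector $\mathbf{u}_i$ the bounded difference of $\ip{\mathbf{t}}{\mathbf{u}_i}$ is $\|\mathbf{u}_i\|_1\cdot O(1/\sqrt{n})$. This can be $O(1)$ per coefficient, which gives sub-Gaussian parameter $O(n)$ rather than $O(1)$.
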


\begin{proof}
From Definition \ref{debabai}, the Babai's algorithm processes the GS basis in reverse order $i=r, r-1, \dots, 1$. At each step $i$, it computes the projection coefficient
\begin{align}
c_i= \Bigl\lfloor
      \frac{\ip{\mathbf{t}-\sum_{j>i}c_j \mathbf{b}_j}
           {\mathbf{b}_i'}}
          {\|\mathbf{b}_i'\|_2^2}
    \Bigr\rceil.
    \label{albabaiscoa}
\end{align}
The algorithm outputs $\mathbf{v}=\sum_{i=1}^r c_i \mathbf{b}_i$ with residual $\mathbf{e}=\mathbf{t}-\mathbf{v}$. 

If $c_j=0$ for all $j>i$, then $\sum_{j>i}c_j \mathbf{b}_j=\mathbf{0}$, and the projection coefficient at step $i$ reduces to
\begin{align}
  \mu_i =\frac{\ip{\mathbf{t}}{\mathbf{b}_i'}}
         {\|\mathbf{b}_i'\|_2^2}.
         \label{almusim}
\end{align}
The rounded coefficient is $c_i=\lfloor\mu_i\rceil$. This equals zero if and only if $|\mu_i|<\frac{1}{2}$.

For case $i=r$, there is no sum over $j>r$, so $\mu_r$ is given by Eq. \eqref{almusim}.

For $i<r$, assume $c_j=0$ for all $j>i$. Eq. \eqref{almusim} holds for step $i$. So, Babai's algorithm returns $\mathbf{0}$ if and only if $|\mu_i|<\frac{1}{2}$ for every $i\in\{1, \dots, r\}$. This can be verified for each $i$. So, the problem reduces to show $|\mu_i|<\frac{1}{2}$ simultaneously for all $i$.

We now bound $|\mu_i|$ using the sub-Gaussian assumption on $\mathbf{t}$. Rewrite the inner product  as
\begin{align}
  \ip{\mathbf{t}}{\mathbf{b}_i'}
 =\sum_{j=1}^n t_j (\mathbf{b}_i')_j,
  \label{alipweight}
\end{align}
where $t_j$s are the components of $\mathbf{t}$ and $(\mathbf{b}_i')_j$s are the components of the $i$-th GS vector. By assumption, $t_j$s are independent, zero-mean, sub-Gaussian random variables with common parameter $\sigma_{\mathbf{t}}^2$. Consider he sub-Gaussian summation property (\cite{Vershynin18}): if $X_1, \dots, X_n$ are independent sub-Gaussian variables with parameters $\sigma_1^2, \dots, \sigma_n^2$, then $\sum_j a_j X_j$ is sub-Gaussian with parameter $\sum_j a_j^2\sigma_j^2$. So, for $a_j=(\mathbf{b}_i')_j$ and $\sigma_j^2=\sigma_{\mathbf{t}}^2$ for all $j$, we get 
\begin{align}
  \ip{\mathbf{t}}{\mathbf{b}_i'}
  \sim \mathrm{subG}\bigl(0, \sigma_{\mathbf{t}}^2\sum_{j=1}^n(\mathbf{b}_i')_j^2\bigr)
 =\mathrm{subG}\bigl(0, \sigma_{\mathbf{t}}^2\nm{\mathbf{b}_i'}_2^2\bigr).
  \label{alipsubG}
\end{align}
Dividing by $\|\mathbf{b}_i'\|_2^2$, the projection coefficient is given by 
\begin{align}
  \mu_i=\frac{\ip{\mathbf{t}}{\mathbf{b}_i'}}
         {\nm{\mathbf{b}_i'}_2^2}
  \sim \mathrm{subG}\Bigl(0,\frac{\sigma_{\mathbf{t}}^2}{\|\mathbf{b}_i'\|_2^2}
  \Bigr).
  \label{almusubG}
\end{align}
The sub-Gaussian parameter of $\mu_i$ is the given by $\sigma_{\mu_i}^2=\sigma_{\mathbf{t}}^2/\nm{\mathbf{b}_i'}_2^2$, which is the ratio of the target's variance to the squared GS norm. So, for any $t\ge 0$ we obtain the sub-Gaussian tail bound as
\begin{align}
  \Pr[|\mu_i|\ge t]
  \le 2\exp\Bigl(-\frac{t^2}{2\sigma_{\mu_i}^2}\Bigr).
  \label{alsubGtailg}
\end{align}
Setting $t=\frac{1}{2}$ (the rounding threshold), we get 
\begin{align}
  \Pr\bigl(|\mu_i|\ge\tfrac{1}{2}\bigr)
\le 2\exp\Bigl(-\frac{(1/2)^2}{2\cdot\sigma_{\mathbf{t}}^2/\nm{\mathbf{b}_i'}_2^2}
       \Bigr)
= 2\exp\bigl(-\frac{\nm{\mathbf{b}_i'}_2^2}{8\sigma_{\mathbf{t}}^2}\bigr).
\label{altailbabai}
\end{align}

By the coarseness condition \eqref{alcoacon}, we get $\|\mathbf{b}_i'\|_2^2/\sigma_{\mathbf{t}}^2=\omega(\log n)$ for every $i$. Hence, the exponent satisfies $\|\mathbf{b}_i'\|_2^2/(8\sigma_{\mathbf{t}}^2)=\omega(\log n)$, and 
\begin{align}
  \Pr[|\mu_i|\ge\tfrac{1}{2}] \le 2\exp(-\omega(\log n))=o(1/n),
  \label{alperdirB}
\end{align}
where the last uses $\exp(-\omega(\log n))=o(1/n^c)$ for every constant $c>0$. As there are $r\le n/2$ GS directions, the event that Babai does not return $\mathbf{0}$ is $\{\exists i: |\mu_i|\ge 1/2\}$. So, by the union bound, we have 
\begin{align}
  \Pr[\exists i:|\mu_i|\ge\tfrac{1}{2}]
  &\le \sum_{i=1}^r
       \Pr[|\mu_i|\ge\tfrac{1}{2}]\le r\cdot 2\exp(-\omega(\log n))
  \notag\\
  &\le \frac{n}{2}\cdot o(1/n)= o(1).
  \label{alunioncoa}
\end{align}
Here, the factor $n/2$ is absorbed as the exponential decay is super-polynomial.

Finally, all $c_i=0$ with a probability $1-o(1)$. The Babai output is given by $\mathbf{v}=\sum_{i=1}^r 0\cdot\mathbf{b}_i=\mathbf{0}$, and the residual is $\mathbf{e}=\mathbf{t}-\mathbf{0}=\mathbf{t}$. So, we get $d_{\mathrm{Babai}}(\mathbf{t},\Lambda)^2=\nm{\mathbf{t}}_2^2$.  This means for structured targets with i.i.d. sub-Gaussian components of variance $\sigma_{\mathbf{t}}^2$, we get Eq. \eqref{alcoares} as the law of large numbers gives $\|\mathbf{t}\|_2^2\approx n\sigma_{\mathbf{t}}^2$.
\end{proof}

The proof supposes that the components $t_j$ of $\mathbf{t}$ are independent. For the real structured target $\mathbf{t}=\Pi_{H_0}(L(g))$, the components are not exactly independent, but asymptotically independent by Proposition \ref{prop:clt} and Theorem \ref{thm:B}. The sub-Gaussian tail bound \eqref{altailbabai} holds for this model. Hence, the Coarse Lattice Theorem applies to the real model.

\begin{table}[t]
\centering
\caption{Verification of the coarseness condition for $2^k$-th cyclotomic fields. $\sigma_{\bf t}=\frac{\pi}{2\sqrt{6}}\approx 0.64$ is the per-component standard deviation of structured targets from Theorem \ref{thm:A}. The ratio $\min_i\nm{\mathbf{b}_i'}_2^2/n$ decreases with $k$, implying $\min_i\nm{\mathbf{b}_i'}_2\sim\sqrt{n/\log n}$ (unproved).}
\label{tab:coarse}
\begin{tabular}{ccccc}
\toprule
$k$ & $n$ & $\min\nm{\mathbf{b}_i'}_2$ & $\sigma_S$
    & $\frac{\sigma_{\bf t}}{\min\nm{\mathbf{b}_i'}_2}$ \\
\midrule
4  & 8    & 2.00 & 0.64 & 0.32 \\
6  & 32   & 3.18 & 0.64 & 0.20 \\
8  & 128  & 5.42 & 0.64 & 0.12 \\
10 & 512  & 9.58 & 0.64 & 0.07 \\
12 & 2048 & 17.4 & 0.64 & 0.04 \\
\bottomrule
\end{tabular}
\end{table}

For structured targets, we have $\sigma_{\mathbf{t}}^2=\pi^2/24 \approx 0.41$ (Theorem \ref{thm:A}).  The GS norms of the log-unit lattice satisfy $\min_i\nm{\mathbf{b}_i'}_2=\omega(\sqrt{\log n})$, growing roughly as $\sqrt{n/\log n}$ (see Table \ref{tab:coarse}). Hence, we have 
\begin{align}
  \frac{\min_i\nm{\mathbf{b}_i'}_2^2}{\sigma_{\mathbf{t}}^2}
 =\omega(\log n),
  \label{alcoarsever}
\end{align}
which is sufficient for the union bound in the proof. Even for worst-case targets with $\sigma_{\mathbf{t}}^2\sim \log n$, the ratio remains $\omega(1)$. So, the coarseness condition is satisfied.

\subsection{Infinite capture radius}
\label{sub:capture}

In the CDPR pipeline, the target is not $L(g_0)$ but $L(g)=L(g_0)+L(\epsilon)$, where $L(\epsilon)\in\Lambda$ can be an arbitrarily large lattice vector. The following theorem shows that Babai's algorithm can resolve this problem. 

\begin{theorem}
\label{thm:capture}
Let $\Lambda\subset\R^n$ have basis $B\in\R^{n\times r}$ with QR decomposition $B=QR$. For any target $\mathbf{t}_0\in\R^n$ and lattice vector $\mathbf{v}=B\mathbf{c}$ ($\mathbf{c}\in\Z^r$), let $\mathbf{v}_0=B\mathbf{x}_0$ be the Babai output for $\mathbf{t}_0$, and $\mathbf{v}'=B\mathbf{x}'$ for $\mathbf{t}=\mathbf{t}_0+\mathbf{v}$. Then we have 
\begin{align}
\mathbf{v}'=\mathbf{v}+\mathbf{v}_0.
\label{eqalcaptu}
\end{align}
\end{theorem}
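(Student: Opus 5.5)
The plan is to prove the identity coefficient-wise by downward induction on the Babai index, with $i=r,r-1,\dots,1$ as in Definition \ref{debabai}. Write $\mathbf{x}_0=(c^0_1,\dots,c^0_r)$ for the integer coefficients produced on input $\mathbf{t}_0$, and $\mathbf{x}'=(c'_1,\dots,c'_r)$ for those produced on input $\mathbf{t}=\mathbf{t}_0+B\mathbf{c}$. The claim $\mathbf{v}'=\mathbf{v}+\mathbf{v}_0$ follows from the stronger statement $\mathbf{x}'=\mathbf{x}_0+\mathbf{c}$, since then $B\mathbf{x}'=B\mathbf{x}_0+B\mathbf{c}$. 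Since $B$ has full column rank, the two statements are in fact equivalent.

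The key algebraic input is the Gram--Schmidt structure, encoded in the QR factorization \eqref{alqr-gs}. Each $\mathbf{b}_j$ lies in $\mathrm{span}(\mathbf{b}_1',\dots,\mathbf{b}_j')$, and its $\mathbf{b}_j'$-component is exactly $\mathbf{b}_j'$. Hence
\[
\ip{\mathbf{b}_j}{\mathbf{b}_i'}=0 \quad (j<i),
\qquad
\ip{\mathbf{b}_i}{\mathbf{b}_i'}=\nm{\mathbf{b}_i'}_2^2 .
\]
Equivalently, $R$ is upper triangular with $R_{ii}=\nm{\mathbf{b}_i'}_2$.

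Now assume inductively that $c'_j=c^0_j+c_j$ for all $j>i$. The numerator in \eqref{albabai} for input $\mathbf{t}$ at step $i$ is
\[
\Bigl\langle \mathbf{t}_0+\sum_{j}c_j\mathbf{b}_j-\sum_{j>i}(c^0_j+c_j)\mathbf{b}_j,\ \mathbf{b}_i'\Bigr\rangle
=\Bigl\langle \mathbf{t}_0-\sum_{j>i}c^0_j\mathbf{b}_j,\ \mathbf{b}_i'\Bigr\rangle+\sum_{j\le i}c_j\ip{\mathbf{b}_j}{\mathbf{b}_i'} .
\]
By the orthogonality relations, the last sum equals $c_i\nm{\mathbf{b}_i'}_2^2$. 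Dividing by $\nm{\mathbf{b}_i'}_2^2$ shows that the unrounded coefficient for $\mathbf{t}$ equals the unrounded coefficient for $\mathbf{t}_0$ plus the integer $c_i$. Rounding commutes with integer translation, $\lfloor x+c_i\rceil=\lfloor x\rceil+c_i$, so $c'_i=c^0_i+c_i$. The base case $i=r$ is the same computation with an empty sum over $j>r$, so the induction closes.

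The only genuine subtlety is the tie-breaking convention for $\lfloor\cdot\rceil$ at half-integers. Translation-equivariance holds for any rule of the form $\lfloor x+1/2\rfloor$. It can fail for symmetric rules such as rounding half away from zero, because these are not translation invariant at ties. I will fix the convention $\lfloor x\rceil=\lfloor x+\tfrac12\rfloor$, or else note that the identity holds whenever no tie occurs, which for continuous targets is a probability-one event. A second point to check is that $\mathbf{t}^\perp$, the component of $\mathbf{t}$ orthogonal to the span of $B$ when $r<n$, plays no role. It is never paired with any $\mathbf{b}_i'$, and $\mathbf{v}\in\mathrm{span}(B)$ does not change it, so the inductive computation goes through unchanged.
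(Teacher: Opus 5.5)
Your proof is correct and follows the paper's argument. The paper runs the same downward induction in QR form, writing $\boldsymbol{\beta}=\boldsymbol{\beta}^{(0)}+R\mathbf{c}$ with $R$ upper triangular and then using integer-shift invariance of $\lfloor\cdot\rceil$; your relations $\langle\mathbf{b}_j,\mathbf{b}_i'\rangle=0$ for $j<i$ and $\langle\mathbf{b}_i,\mathbf{b}_i'\rangle=\|\mathbf{b}_i'\|_2^2$ are exactly that triangularity written in coordinates. Your remark on the tie-breaking convention is a genuine refinement, since the paper asserts $\lfloor a+b\rceil=a+\lfloor b\rceil$ without fixing a rounding rule, and the identity fails for rules such as round-half-to-even.
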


\begin{proof}
It is easy to show the nearest-integer rounding function $\lfloor\cdot\rceil:\R\to\Z$ satisfies \begin{align}
  \lfloor a+b\rceil=a + \lfloor b\rceil
  \label{eqalintshift}
\end{align}
for all $a\in\Z,b\in\R$. Moreover, the basis $B=QR$ with $Q\in\R^{n\times r}$ (orthonormal columns) and $R\in\R^{r\times r}$ (upper triangular, positive diagonal). Babai's algorithm computes 
\begin{itemize}
\item $\boldsymbol{\beta}=Q^T\mathbf{t}\in\R^r$
\item For $i=r,r-1, \dots, 1$:
  \begin{align}
    x_i
   =\Bigl\lfloor
        \frac{\beta_i-\sum_{j>i}R_{ij} x_j}{R_{ii}}
      \Bigr\rceil.
      \label{albabaer}
  \end{align}
\item Output $\mathbf{v}=B\mathbf{x}=QR\mathbf{x}$.
\end{itemize}
Here, the columns of $Q$ are the normalized GS vectors, $Q_i=\mathbf{b}_i'/\|\mathbf{b}_i'\|_2$, and $R_{ii}=\|\mathbf{b}_i'\|_2$, $R_{ji}=\mu_{ij}\nm{\mathbf{b}_j'}_2$ for $j<i$. The quantity $\beta_i=\mathbf{q}_i^T\mathbf{t} =\ip{\mathbf{t}}{\mathbf{b}_i'}/\|\mathbf{b}_i'\|_2$ is the projection of $\mathbf{t}$ onto $\mathbf{b}_i'$, scaled by $\nm{\mathbf{b}_i'}_2$. Equation \eqref{albabaer} is equivalent to the GS back-substitution in Definition \ref{debabai}.

Let $\mathbf{t}=\mathbf{t}_0+\mathbf{v}=\mathbf{t}_0+B\mathbf{c}$ with $\mathbf{c}\in\Z^r$. The projected vector is given by 
\begin{align}
  \boldsymbol{\beta}
 =Q^T\mathbf{t}
 =Q^T\mathbf{t}_0 + Q^T B\mathbf{c}
 =\boldsymbol{\beta}^{(0)} + R\mathbf{c},
  \label{eqalbetashift}
\end{align}
where $\boldsymbol{\beta}^{(0)}=Q^T\mathbf{t}_0$ is the projection of the unperturbed target, and we have used $Q^TB=R$. Since $R$ is upper triangular, $(R\mathbf{c})_i=\sum_{j\ge i}R_{ij} c_j$ depends only on $c_i, c_{i+1}, \dots, c_r$.

For $i=r$, the back-substitution at the top step gives
\begin{align}
  x_r' =\Bigl\lfloor\frac{\beta_r}{R_{rr}}\Bigr\rceil.
  \label{eqalbasecase}
\end{align}
From Eq. \eqref{eqalbetashift}, we get $\beta_r=\beta_r^{(0)} +(R\mathbf{c})_r=\beta_r^{(0)}+R_{rr} c_r$ since $R$ is upper triangular, where the only term in $(R\mathbf{c})_r$ involving row $r$ is $R_{rr} c_r$. Dividing by $R_{rr}$, we get 
\begin{align}
  \frac{\beta_r}{R_{rr}}
 =\frac{\beta_r^{(0)}+R_{rr} c_r}{R_{rr}}
 =c_r + \frac{\beta_r^{(0)}}{R_{rr}}.
  \label{eqalbasedi}
\end{align}
Since $c_r\in\Z$, Eq.\eqref{eqalintshift} gives
\begin{align}
  x_r'
 =\Bigl\lfloor c_r+\frac{\beta_r^{(0)}}{R_{rr}}\Bigr\rceil
 =c_r + \Bigl\lfloor\frac{\beta_r^{(0)}}{R_{rr}}\Bigr\rceil
 =c_r + x_r^{(0)}.
  \label{eqalbaseround}
\end{align}

For $i<r$, assume $x_j'=c_j+x_j^{(0)}$ for all $j>i$. At step $i$, Babai's algorithm computes
\begin{align}
  x_i'=\Bigl\lfloor\frac{\beta_i-\sum_{j>i}R_{ij} x_j'}{R_{ii}}\Bigr\rceil.
    \label{eqalindbabai}
\end{align}
Using $\beta_i=\beta_i^{(0)}+(R\mathbf{c})_i$ from Eq.\eqref{eqalbetashift} and $x_j'=c_j+x_j^{(0)}$, we then get 
\begin{align}
  \beta_i - \sum_{j>i}R_{ij} x_j'=\beta_i^{(0)}+ (R\mathbf{c})_i- \sum_{j>i}R_{ij} c_j- \sum_{j>i}R_{ij} x_j^{(0)}.
    \label{eqalindnumer}
\end{align}
Note $(R\mathbf{c})_i=\sum_{j\ge i}R_{ij} c_j=R_{ii} c_i+\sum_{j>i}R_{ij} c_j$. So, we have 
\begin{align}
  \beta_i-\sum_{j>i}R_{ij} x_j'= R_{ii} c_i + (\beta_i^{(0)}-\sum_{j>i}R_{ij} x_j^{(0)}).
      \label{eqalindsim}
\end{align}
Dividing by $R_{ii}$, we then get 
\begin{align}
  \frac{\beta_i-\sum_{j>i}R_{ij} x_j'}{R_{ii}}
 =c_i+\frac{\beta_i^{(0)}-\sum_{j>i}R_{ij} x_j^{(0)}}{R_{ii}}.
  \label{eqalinddi}
\end{align}
Since $c_i\in\Z$, Eq.\eqref{eqalintshift} gives 
\begin{align}
  x_i'
 =\Bigl\lfloor
       c_i + \frac{\beta_i^{(0)}
             -\sum_{j>i}R_{ij} x_j^{(0)}}{R_{ii}}
     \Bigr\rceil
 =c_i + x_i^{(0)}.
  \label{eqalindround}
\end{align}
This completes the induction.

By induction, $x_i'=c_i+x_i^{(0)}$ for all $i\in\{1, \dots, r\}$, i.e., $\mathbf{x}'=\mathbf{c}+\mathbf{x}_0$. Applying the basis matrix gives 
\begin{align}
  \mathbf{v}'=B\mathbf{x}'=B(\mathbf{c}+\mathbf{x}_0)
 =B\mathbf{c}+B\mathbf{x}_0=\mathbf{v}+\mathbf{v}_0.
  \label{alcapcons}
\end{align}

Especially, if $\mathbf{v}_0=\mathbf{0}$, then $\mathbf{v}'=\mathbf{v}$, i.e., Babai recovers the lattice vector $\mathbf{v}$. This completes the proof.
\end{proof}

\subsection{Application to the CDPR pipeline}

We combine Theorems \ref{thm:coarse} and \ref{thm:capture} with the CDPR attack.

\begin{enumerate}[label=\textbf{(\arabic*)},leftmargin=3em]
\item \textbf{PIP output.} Suppose the quantum PIP algorithm produces a generator $g=g_0\epsilon$, where $g_0$ is a short generator and $\epsilon\in R^\times$ is a unit.  In the log-embedding, it gives 
  \begin{align}
    L(g)=L(g_0) + L(\epsilon),
    \qquad L(\epsilon)\in\Lambda.
  \end{align}

\item \textbf{Babai's algorithm on the balanced target 
  (Theorem \ref{thm:coarse}).} The balanced target is $\mathbf{t}_0=\Pi_{H_0}(L(g_0))$. By the Coarse Lattice Theorem \ref{thm:coarse}, Babai's algorithm returns $\mathbf{v}_0=\mathbf{0}$ for $\mathbf{t}_0$ with a probability $1-o(1)$. 

\item \textbf{Babai's algorithm on the PIP output
  (Theorem \ref{thm:capture}).} The real target is $\Pi_{H_0}(L(g))=\mathbf{t}_0+L(\epsilon)$ with $L(\epsilon)\in\Lambda$. By the Capture Theorem \ref{thm:capture} with $\mathbf{v}=L(\epsilon)$ and $\mathbf{v}_0=\mathbf{0}$, Babai's output is 
  \begin{align}
    \Pi_{H_0}(L(g))= L(\epsilon) + \mathbf{0}
   =L(\epsilon).
  \end{align}
  
\item \textbf{Short generator recovery.} 
  From $L(\epsilon)$, the unit $\epsilon$ is reconstructed. The short generator is then $g'=g\cdot\epsilon^{-1}=g_0$. The approximation factor is given by 
  \begin{align}
    \gamma
   =\exp(\nm{\Pi_{H_0}(L(g_0))}_\infty)
   =\exp(O(\sqrt{\log n})),
  \end{align}
  using Theorem \ref{thm:B} for the $L^\infty$ norm of the
  balanced target.
\end{enumerate}

This result is unconditional except for the probabilistic statement in step (2) that the balanced target maps to zero. This holds with probability $1-o(1)$ by the Coarse Lattice Theorem \ref{thm:coarse}, using only the sub-Gaussianity of the log-embedding.

\section{The Trigamma Theorem for Module Lattices}\label{setrigamma}

We now extend our ideal-lattice analysis to the module lattice setting. The CDPR attack on module lattices reduces the Module Short Generator Problem (Module-SGP) to an ideal-lattice SGP on the module's determinant ideal (Part II).

\subsection{The Trigamma Theorem}\label{sub:trigamma}

ML-KEM deploys module lattices of rank $d \in \{2,3,4\}$ over the cyclotomic ring $R=\mathbb{Z}[\zeta_{256}]$. For a general module with basis matrix $B \in R^{d \times d}$, the CDPR attack reduces the module SGP to an ideal SGP on the determinant ideal $(\det B) \subset R$. The key question reduces to: what is the CVP distance for the shortest generator of this determinant ideal?

From Theorems \ref{thm:A} and \ref{thm:B}, the CVP distance and resulting SGP approximation factor are controlled by the per-component variance $\sigma_{g_0}^2=\frac{1}{n}\nm{\Pi_{H_0}(L(g_0))}_2^2$, where $g_0=\det B$ is the shortest generator of the determinant ideal.

\begin{theorem}\label{thm:tgamma}
Let $R=\mathbb{Z}[\zeta_{2^k}]$ be the $2^k$-th cyclotomic ring with $n=2^{k-1}$, and  $B=(b_{ij}) \in R^{d \times d}$ be a module basis matrix whose entries are independent ring elements. Each entry $b_{ij}$ has i.i.d. zero-mean coefficients with variance $\sigma_c^2$ and finite fourth moment, with mutual independence across all entries. Define $g_0=\det(B) \in R$ and the per-component variance $\sigma_{g_0}^2=\frac{1}{n}\nm{\Pi_{H_0}(L(g_0))}_2^2$. Then, 
\begin{align}
\sigma_{g_0}^2 \xrightarrow{p} \frac{1}{4}\sum_{j=1}^d \psi'(j) \quad \text{as } n \to \infty, \label{altrigamma}
\end{align}
where $\psi'(s)$ denotes the trigamma function.
\end{theorem}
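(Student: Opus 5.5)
The plan is to reduce the statement to the proof of Theorem~\ref{thm:A}, with the single circularly symmetric Gaussian replaced by the determinant of a complex Ginibre matrix. The key algebraic input is that each embedding $\sigma_j\colon K\to\C$ is a ring homomorphism. Hence $\sigma_j(g_0)=\sigma_j(\det B)=\det\bigl(\sigma_j(b_{i\ell})\bigr)_{i,\ell}$, and therefore
\begin{align*}
t_j:=\log|\sigma_j(g_0)|=\log\bigl|\det M^{(j)}\bigr|, \qquad M^{(j)}:=\bigl(\sigma_j(b_{i\ell})\bigr)_{i,\ell}\in\C^{d\times d}.
\end{align*}
As in Steps~1--2 of Theorem~\ref{thm:A}, conjugate symmetry gives $t_{m-j}=t_j$. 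Consequently $\sigma_{g_0}^2=\frac{2}{n}\sum_{j\in J}(t_j-\bar t)^2$, which is the empirical variance of $\{t_j\}_{j\in J}$. It therefore suffices to identify the limiting law of $t_j$, show that its second moments are uniformly integrable, and show that the empirical variance concentrates.

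\textbf{Step 1: the joint limit.} Because the $d^2$ entries are mutually independent, Proposition~\ref{prop:clt} applies to each entry separately. It shows that the vectors $\bigl(\sigma_{j_1}(b_{i\ell}),\dots,\sigma_{j_s}(b_{i\ell})\bigr)/\sqrt{n\sigma_c^2}$ converge to i.i.d.\ $\CN(0,1)$ variables, independently across $(i,\ell)$. So for any fixed distinct $j_1,\dots,j_s$, the matrices $M^{(j_1)},\dots,M^{(j_s)}$, scaled by $1/\sqrt{n\sigma_c^2}$, converge jointly to independent $d\times d$ complex Ginibre matrices $G_1,\dots,G_s$. The function $\log|\det|$ is continuous off the null set $\{\det=0\}$, so the continuous mapping theorem gives
\begin{align*}
t_j-\tfrac{d}{2}\log(n\sigma_c^2)\xrightarrow{d}\log|\det G|.
\end{align*}

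\textbf{Step 2: the trigamma variance.} Write the QR decomposition $G=QR$ with $Q$ unitary. By the complex Bartlett decomposition (Tao \cite{Tao12}), the quantities $|R_{ii}|^2$ are independent and $|R_{ii}|^2\sim\mathrm{Gamma}(d-i+1,1)$, i.e.\ $2|R_{ii}|^2\sim\chi^2_{2(d-i+1)}$. Hence
\begin{align*}
\log|\det G|=\tfrac12\sum_{i=1}^d\log|R_{ii}|^2,
\end{align*}
and by \eqref{allogchiq} with $k=2j$,
\begin{align*}
\Var(\log|\det G|)=\tfrac14\sum_{j=1}^d\psi'(j).
\end{align*}
For $d=1$ this recovers $\pi^2/24$, which is consistent with Lemma~\ref{lem:vari}.

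\textbf{Step 3: passing to the empirical variance.} This step mirrors Steps~4--7 of Theorem~\ref{thm:A}. Write
\begin{align*}
\sigma_{g_0}^2=\frac{2}{n}\sum_{j\in J}X_j^2-\bar X^2, \qquad X_j:=t_j-\tfrac d2\log(n\sigma_c^2)-\E\log|\det G|.
\end{align*}
One then shows that $\frac2n\sum_{j\in J}X_j^2\to\Var(\log|\det G|)$ and $\bar X\to0$ in probability. Both follow from a second-moment (Chebyshev) bound, using two facts: pairwise asymptotic independence of $(X_j,X_{j'})$ from Step~1, and $\sup_n\E[X_j^4]<\infty$.

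\textbf{Main obstacle: uniform integrability near $\det=0$.} The upper tail is easy, via Hadamard's inequality and the fourth moments of the entries. The hard part is the lower tail: bounding $\E[(\log^-|\det M^{(j)}|/\sqrt n)^4]$ uniformly in $n$, where $\log^-x:=\max\{-\log x,0\}$. Small-ball probabilities of a degree-$d$ polynomial in discrete coefficients can be large. The plan is to condition row by row. Expanding along the last row, $\det M^{(j)}$ is a linear form $\sum_\ell \sigma_j(b_{d\ell})C_\ell$ in the last row, with cofactors $C_\ell$ independent of that row. Conditionally on the $C_\ell$, this is a sum of the independent coefficients of $b_{d1},\dots,b_{dd}$ with weights of modulus at least $\max_\ell|C_\ell|$. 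A Littlewood--Offord / Marcinkiewicz--Zygmund type anti-concentration bound, of the kind used in Step~5 of Theorem~\ref{thm:A}, then gives
\begin{align*}
\Pr\bigl[|\det M^{(j)}|<\varepsilon\,\sqrt n\,\max_\ell|C_\ell|\bigr]\le C\varepsilon+O(n^{-1/2}).
\end{align*}
Inducting on $d$ through the cofactors controls $\log^-$ up to an $O(n^{-1/2})$ atom. That atom (which includes the event $\det B=0$) must be handled by a separate crude bound $|\log|\sigma_j(g_0)||=O(\log n)$ for nonzero algebraic integers, together with the decay of the probability that $g_0=0$. I expect this truncation argument to be where most of the work lies.
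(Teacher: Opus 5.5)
Your Steps~1--2 are the paper's own argument: the identity $\sigma_j(\det B)=\det\sigma_j(B)$, the entrywise CLT to independent Ginibre limits, the complex Bartlett decomposition, and $\Var\log\chi^2_{2j}=\psi'(j)$. For Step~3 the paper only says ``apply Theorem~\ref{thm:A} to $g_0$''. That is not justified, for two reasons. First, $g_0=\det B$ does not have i.i.d.\ coefficients. Second, the uniform-integrability input of Theorem~\ref{thm:A}, a uniform bound on $\E|\sigma_j(g)/\sqrt{n\sigma_c^2}|^{-4}$, cannot hold: already $\E|Z|^{-4}=\E[W^{-2}]=\infty$ for $Z\sim\CN(0,1)$, $W=|Z|^2\sim\Exp(1)$, and Fatou's lemma along a Skorokhod coupling makes the $\liminf$ of the finite-$n$ moments infinite. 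So you have located the real crux correctly, and the paper gives you nothing to borrow there.

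However, your resolution of that crux fails. The ``crude bound'' $|\log|\sigma_j(g_0)||=O(\log n)$ for nonzero algebraic integers is false; only the upper side is $O(\log n)$. From $|\Nm(g_0)|\ge1$ you get only $\log|\sigma_j(g_0)|\ge-\sum_{k\in J\setminus\{j\}}\log|\sigma_k(g_0)|=-O(n\log n)$. This is sharp up to the logarithm: pigeonholing the $2^n$ values $\sum_{i<n}c_i\zeta^{ij}$, $c\in\{0,1\}^n$, inside the disc of radius $n$ gives a nonzero $g\in R$ with coefficients in $\{-1,0,1\}$ and $|\sigma_j(g)|\le Cn2^{-n/2}$. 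The Berry--Esseen error $O(n^{-1/2})$ in your anti-concentration bound does not shrink as $\varepsilon\to0$. Your estimates therefore control the atom's contribution to $\E[X_j^2]$ only by $O(n^{-1/2})\cdot O\bigl((n\log n)^2\bigr)$, which diverges. Neither $\sup_n\E[X_j^4]<\infty$ nor uniform integrability of $X_j^2$ follows.

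The missing ingredient is a small-ball estimate with a much smaller additive error at polynomially small scales, e.g.\ $\max_{j\in J}\Pr\bigl[|\sigma_j(g_0)|\le n^{-K}\bigr]=o(1/n)$ for a fixed $K$. Possible routes are Esseen's inequality with decay of the characteristic function of $\sum_ic_i\zeta^{ij}$ away from the origin, or an LCD-type inverse Littlewood--Offord bound, pushed through your row-by-row conditioning. A union bound then gives $\max_{j\in J}|X_j|=O(\log n)$ with probability $1-o(1)$. You should then prove convergence of $\frac2n\sum_{j\in J}X_j^2$ by truncation at a fixed level $T$ rather than via fourth moments. On $\{|X_j|\le K'\log n\}$ your lower-tail bound $Ce^{-s}+O(n^{-1/2})$ suffices, because $(\log n)^2n^{-1/2}\to0$.

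A secondary point: Step~1 gives asymptotic independence only for fixed indices. The Chebyshev step needs $\Cov(X_j^2\wedge T,X_{j'}^2\wedge T)\to0$ uniformly over all $\Theta(n^2)$ pairs in $J$. That requires a quantitative CLT uniform in the pair, such as Lindeberg replacement or a multivariate Berry--Esseen bound. It is available because, for every pair of distinct $j,j'\in J$, the vector $(\Re\sigma_j(b),\Im\sigma_j(b),\Re\sigma_{j'}(b),\Im\sigma_{j'}(b))/\sqrt{n\sigma_c^2}$ of each entry $b$ has covariance exactly $\frac12 I_4$.
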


\begin{proof}
Note every canonical embedding $\sigma_j: R \to \mathbb{C}$ is a ring homomorphism, so it commutes with the determinant operation for matrices over $R$. For any $j$, we have
\begin{align}
\sigma_j(g_0)=\sigma_j(\det B)=\det(\sigma_j(B)),
\label{eqnidea}
\end{align}
where $\sigma_j(B) \in \mathbb{C}^{d \times d}$ denotes the matrix obtained by applying $\sigma_j$ entry-wise to $B$. Taking the modulus and logarithm of both sides, we obtain
\begin{align}
\log|\sigma_j(g_0)|=\log|\det(\sigma_j(B))|.
\label{eqndeta}
\end{align}

Note each entry of $B$ is a random ring element with i.i.d. zero-mean coefficients of variance $\sigma_c^2$. Applying Proposition \ref{prop:clt} entry-wise, for any fixed non-conjugate index $j \in J$, the normalized embedded matrix converges in distribution as 
\begin{align}
\frac{1}{\tau} \sigma_j(B) \xrightarrow{d} M^{(j)}, \tau=\sqrt{n\sigma_c^2},
\label{eqndistr}
\end{align}
where $M^{(j)}$ is a $d \times d$ matrix with i.i.d. standard complex Gaussian entries $\mathcal{CN}(0,1)$. Moreover, from the joint CLT, the limiting matrices $\{M^{(j)}\}_{j \in J}$ are mutually independent, as the embeddings for different non-conjugate indices are asymptotically independent by Proposition \ref{prop:clt}.

Let $M$ be a $d \times d$ matrix with i.i.d. $\mathcal{CN}(0, \tau^2)$ entries. Consider its QR decomposition $M=QR$. A classical result in random matrix theory gives the exact distribution of the squared diagonal entries of $R$ \cite{Muirhead82,Forrester10}: the normalized variables $R_{ii}^2 / \tau^2$ are independent, and has Gamma distributions as $\frac{R_{ii}^2}{\tau^2} \sim \Gamma(d - i + 1, 1)$ for $i=1, \dots, d$, where $\Gamma(k, 1)$ denotes the Gamma distribution with shape parameter $k$ and rate parameter 1. 

Since $Q$ is unitary, we have $|\det Q|=1$, so $|\det M|=\prod_{i=1}^d R_{ii}$. Taking the logarithm of both sides, we decompose the log-determinant as
\begin{align}
\log|\det M|=\sum_{i=1}^d \log R_{ii}
= \sum_{i=1}^d \frac{1}{2}\log\Bigl(R_{ii}^2\Bigr)
= d\log\tau + \frac{1}{2}\sum_{i=1}^d \log W_i,
\label{allogddecomp}
\end{align}
where we have defined the independent normalized random variables $W_i:= R_{ii}^2 / \tau^2 \sim \Gamma(d - i + 1, 1)$. The term $d\log\tau$ is a constant, as it depends only on the scaling factor $\tau$ and the module rank $d$.

Note the deterministic term $d\log\tau$ does not contribute to the variance of $\log|\det M|$. By the independence of the $W_i$, we get the variance as
\begin{align}
\Var(\log|\det M|)=\frac{1}{4}\sum_{i=1}^d \Var(\log W_i).
\label{eqnvar}
\end{align}
For a Gamma-distributed random variable $W \sim \Gamma(k, 1)$, the logarithmic moments are given by: $\E[\log W]=\psi(k)$ and $\Var(\log W)=\psi'(k)$, where $\psi$ is the digamma function and $\psi'$ is the trigamma function. Using $k_i=d - i + 1$ and re-indexing the sum with $j=d - i + 1$, we obtain
\begin{align}
\Var(\log|\det M|)=\frac{1}{4}\sum_{j=1}^d \psi'(j). 
\label{alvaogdet}
\end{align}
This variance is independent of $\tau^2$, and hence independent of $\sigma_c^2$ and the modulus $q$. 

For any fixed finite collection of different indices $j_1, \dots, j_s \in J$, the normalized matrices $\frac{1}{\tau}\sigma_{j_\ell}(B)$ converge jointly in distribution to independent $d \times d$ standard complex Gaussian matrices. We apply the continuous mapping theorem to the function $f(M)=\log|\det(M)|$ and obtain 
\begin{align}
(\log|\sigma_{j_\ell}(g_0)|)_{\ell=1}^s \xrightarrow{d} (\log|\det M^{(j_\ell)}|)_{\ell=1}^s,
\label{eqddlto}
\end{align}
where the right-hand side consists of i.i.d. copies of $\log|\det M|$.

Finally, applying Theorem \ref{thm:A} to the determinant element $g_0$, the per-component variance $\sigma_{g_0}^2$ converges in probability to the variance of a single centered log-determinant component. Since the components are asymptotically i.i.d. with the distribution of $\log|\det M| - \E[\log|\det M|]$, we obtain
\begin{align}
\sigma_{g_0}^2 \xrightarrow{p} \Var\Bigl(\log|\det M|\Bigr)=\frac{1}{4}\sum_{j=1}^d \psi'(j),
\label{eqnlogdis}
\end{align}
which completes the proof.
\end{proof}

Across all valid centered distributions, the limit \eqref{altrigamma} depends only on the module rank $d$. Table \ref{tab:trigamma} confirms this numerically: $\sigma_{g_0}$ is stable across all tested values of $\sigma_c$ corresponding to moduli $q \in \{3, 11, 101, 3329, 65537\}$.

\begin{table}[t]
\centering
\caption{Trigamma Theorem \ref{thm:tgamma} predictions, Monte Carlo simulations, and ring-based computations for $k=9$, $n=256$, $q=3329$. Ring values match the Gaussian prediction within $\pm 2\%$ across all tested distributions (uniform mod-$q$, CBD $\eta=2$, Gaussian, $U\{-3,\dots, 3\}$). $\rho_\infty$ values include the $O(\sqrt{\ln\ln n}) \approx 1.3$ extreme-value correction over the leading term $\sigma_d\sqrt{2\ln n}$.}
\label{tab:trigamma}
\begin{tabular}{@{}ccccc@{}}
\toprule
$d$ & $\sigma_{g_0}$ (Theory) & Monte Carlo & Ring ($q=3329$) & $\rho_\infty$ ($n=256$) \\
\midrule
1 & 0.641 & 0.641 & 0.637 & 2.51 \\
2 & 0.757 & 0.761 & 0.751 & 2.56 \\
3 & 0.819 & 0.818 & 0.814 & 2.78 \\
4 & 0.861 & 0.863 & 0.857 & 2.83 \\
\bottomrule
\end{tabular}
\end{table}

\subsection{ML-KEM Security Analysis}\label{sub:mlkem}

We now combine all results to analyze the security of ML-KEM. ML-KEM is the NIST-standardized post-quantum public-key encryption scheme, whose security relies on the hardness of Module-LWE over the cyclotomic ring $R=\mathbb{Z}[\zeta_{256}]$ with module ranks $d \in \{2,3,4\}$ (corresponding to ML-KEM-512, ML-KEM-768, and ML-KEM-1024, respectively), see Tables \ref{tab:security} and \ref{tab:cumulative}.

For ML-KEM-1024 ($d=4$, $n=256$), the Trigamma Theorem \ref{thm:tgamma} gives $\sigma_{g_0} \approx 0.861$ in the asymptotic Gaussian limit. With finite-$n$ ring-based computations (Table \ref{tab:trigamma}) we get $\sigma_{g_0} \approx 1.03$ for the ML-KEM modulus $q=3329$. Applying Theorem \ref{thm:B} with $\sigma=\sigma_{g_0}$, we get SGP approximation factor 
\begin{align}
\gamma_{\rm SGP} \approx \exp\Bigl(\sigma_{g_0}\sqrt{2\ln n}\Bigr)
= \begin{cases}
\exp(0.861 \times 3.33) \approx 2^{4.1} & \text{(Gaussian limit)},\\
\exp(1.03 \times 3.33) \approx 2^{4.9} & \text{(finite-$n$ ring value)}.
\end{cases}
\label{almlkem-gamma}
\end{align}
Both are below $q/2 \approx 2^{10.7}$ threshold required for the CDPR attack to succeed. The security of ML-KEM against this attack no longer depends on the SGP approximation factor, but instead rests on the quantum gate cost of the Biasse-Song PIP algorithm \cite{BiasseSong16}, which will be analyzed in the next Part.

Note $\gamma_{\rm SGP}$ is the per-ideal SGP factor for the module's determinant ideal. The total Module-SVP approximation factor is $\gamma=\alpha_d \cdot \gamma_{\rm SGP}$, where $\alpha_d$ is the module reduction factor from Part II of this series. For MLWE-distributed module bases (Part II), we have $\alpha_d=\sqrt{C}=O(1)$, so the total approximation factor remains $\gamma=O(1) \cdot 2^{4-5}$.

\begin{table}[t]
\centering
\caption{ML-KEM security summary against the CDPR attack. $\gamma_{\rm SGP}$ is the per-ideal SGP factor from Part III. $\gamma=\alpha_d \cdot \gamma_{\rm SGP}$ composes this with the module reduction factor from Part II.}
\label{tab:security}
\begin{tabular}{@{}lcccc@{}}
\toprule
\textbf{Scenario} & $\gamma_{\rm SGP}$ (bits) & $\gamma=\alpha_d\gamma_{\rm SGP}$ (bits) & \textbf{Status} & \textbf{Secure?} \\
\midrule
CDPR original \cite{CDPR16} & $2^{54}$ & $\Theta(2^{54})$ & Proved & Yes \\
Parts I+II base & $2^{54}$ & $\Theta(2^{54})$ & Proved & Yes \\
Part III ($d=4$) & $2^{4\text{--}5}$ & $O(1)\cdot 2^{4\text{--}5}$ & Proved &  Conditional$^*$ \\
\bottomrule
\multicolumn{5}{@{}l@{}}{\footnotesize $^*$Conditional on the CLT hypothesis (numerically verified for $k \in \{4, \dots, 12\}$, and quantum PIP algorithm.}
\end{tabular}
\end{table}

\begin{table}[t]
\centering
\caption{Cumulative improvements to the CDPR attack analysis across Parts I–III.}
\label{tab:cumulative}
\begin{tabular}{@{}lcc@{}}
\toprule
\textbf{Component} & \textbf{Prior State} & \textbf{This Work} \\
\midrule
Class number (Part I) & $h^+=1$ conditional & Proved for $k \le 12$ \\
Module reduction (Part II) & $(\tfrac{n}{2})^{d-1}$ & $\alpha_d=\sqrt{C}=O(1)$ \\
Sign discrepancy (Part II) & $\Theta(\sqrt{nk})$ & $\delta' \approx 0.44=O(1)$ \\
SGP approx factor (Part III) & $\exp(\tilde{O}(\sqrt{n}))$ & $\gamma_{\rm SGP}=\exp(O(\sqrt{\log n}))$ \\
$\sigma_{g_0}$ (Part III) & $\Theta(\sqrt{n})$ assumed & $O(1)$, $q$-independent \\
Lattice geometry (Part III) & Hard CVP on $\Lambda$ & Babai's algorithm is trivial \\
True security bottleneck & Approximation factor & Quantum PIP gate cost \\
\bottomrule
\end{tabular}
\end{table}

\section{Conclusion}

We have proved several results for the short generator problem in $2$-power cyclotomic fields. Theorem \ref{thm:A} identifies the exact $L^2$ constant ${\pi}/(2\sqrt{6})$. Theorem \ref{thm:B} yields the sub-polynomial SGP factor $\exp(O(\sqrt{\log n}))$. Theorem \ref{thm:C} shows the nearest lattice point is the origin. Numerical verification confirms the result for all tested parameters $k\in\{4, \dots, 12\}$. Theorem \ref{thm:capture} strengthens the Babai's recovery to be unconditional for the cyclotomic-unit basis. The Coarse Lattice Theorem \ref{thm:coarse} explains the reason why $\Lambda$ is too coarse for structured targets. The Trigamma Theorem \ref{thm:tgamma} resolves the ML-KEM question, i.e., $\sigma_{g_0}=O(1)$ for module determinant ideals, independently of $q$.

Part I proved $h_k^+=1$ unconditionally for $k\le 12$. Part II reduced the module-SVP polynomial factor from $n^{O(d)}$ to a universal constant $\alpha_d=O(1)$ independent of $d$ under a balance hypothesis automatic for MLWE inputs. Part III identifies the CDPR exponent as being controlled by $\sigma_{g_0}$. For ML-KEM-1024 ($d=4$, $n=256$), the SGP factor is $\gamma_{\mathrm{SGP}}\approx 2^{4-5}$, see Table \ref{tab:trigamma}. The module reduction adds a factor $\alpha_d=O(1)$ from Part II, independent of $d$. The combined approximation quality is sub-polynomial and sufficient for the attack, so ML-KEM's security rests on the quantum gate cost of PIP \cite{BiasseSong16}.

\section*{Acknowledgments}

[Acknowledgments will be added in the final version.]

\end{document}